\numberwithin{equation}{section}
\theoremstyle{plain}
\theoremstyle{plain}
\newtheorem*{ThmA}{\bf{Theorem A}}
\newtheorem*{ThmB}{\bf{Theorem B}}
\theoremstyle{plain}
\newtheorem{thm}{Theorem}[section]
\newtheorem{corollary}[thm]{Corollary}
\newtheorem{proposition}[thm]{Proposition}
\theoremstyle{definition}
\newtheorem{definition}[thm]{Definition}
\theoremstyle{remark}
\newcommand {\e}{\mathbb E}
\newcommand{\VolX}{\langle X \rangle _t}
\newcommand{\VolY}{\langle Y \rangle _t}
\newcommand{\target}{\overline \sigma}
\newcommand{\I}{1{\hskip -2.5 pt}\hbox{I}}
\def\registered{{\ooalign{\hfil\raise .00ex\hbox{\scriptsize R}\hfil\crcr\mathhexbox20D}}}
\begin{document}

\title{\bf{ Valuation of asset and volatility derivatives using decoupled time-changed  L\'{e}vy processes} \footnote{A MATHEMATICA\textsuperscript{\textregistered} online supplement to this paper containing numerical reuslts is available at the website: \texttt{http://lorenzotorricelli.it/Code/DTC\_TVO\_Implementation.nb}.}  }
\author{Lorenzo Torricelli\footnote{Department of Mathematics,
University College London, \texttt{lorenzo.torricelli.11@ucl.ac.uk }} \date{}}

\maketitle

\vspace{-.3cm}

\begin{abstract}
In this paper we propose a general derivative pricing framework that employs \emph{decoupled time-changed} (DTC) {L\'{e}vy processes} to model the underlying assets of  contingent claims. 
 A DTC L\'{e}vy process  is a generalized time-changed L\'evy process whose continuous and pure jump parts are allowed to follow separate random time scalings;
  we devise the martingale structure for a DTC L\'evy-driven asset and revisit  many popular models which fall under this framework. Postulating different time changes for the underlying L\'evy decomposition allows the introduction of asset price models consistent with the assumption of a correlated pair of continuous and jump market activity rates; we study one illustrative DTC model 
of this kind based on the so-called \emph{Wishart process}.   The theory we develop is applied to the problem of pricing not only claims that depend on the price or the volatility of an underlying asset, but also more sophisticated derivatives whose payoffs rely on the joint performance of these two financial variables, such as the \emph{target volatility option} (TVO).  We solve the pricing problem through a Fourier-inversion method. Numerical analyses validating our techniques are provided. In particular, we present some evidence that correlating the activity rates could be beneficial for modeling the  volatility skew dynamics.  
\end{abstract}
\noindent {\small {\bf{ Keywords}}: Derivative pricing; time changes; L\'{e}vy processes;  joint asset and volatility derivatives;  target volatility option; Wishart process}

\noindent {\small {\bf{MSC}}:   91G20, 60G46}

\section{Introduction}
The use of L\'{e}vy models in finance dates back to to the classic work of Merton (1976), who proposed that the log-price dynamics  of a stock return should follow an exponential Brownian diffusion punctuated by a Poisson arrival process of normally distributed jumps. In that work, two of the main shortcomings of the Black-Scholes model, the continuity of the sample paths and the normality of returns, were addressed for the first time. Over the years, L\'{e}vy processes have proved to be a flexible and yet mathematically tractable instrument for asset price modeling and sampling. 
One of the easiest ways of producing a L\'{e}vy process is to use the principle of \emph{subordination} of a Brownian motion $W_t$. If $T_t$ is an increasing L\'{e}vy process independent of $W_t$, then the subordinated process $W_{T_t}$
will still be of L\'{e}vy type. 
 Subordination is the simplest example of a \emph{time change}, that is, the operation whereby one considers the time evolution of a stochastic process as occurring at a random time.
 
 Return models depending on time-changed Brownian motions have been conjectured since Clark (1973); further theoretical support to the financial use of time-changed models is given by Monroe's (1978) theorem, asserting that any semimartingale can be viewed as a time change of a Brownian motion. Consequently, any semimartingale representing the log-price process of an asset can be considered as a re-scaled Wiener process. Empirical studies (Ane and Geman, 2000) confirmed that normality of returns can be recovered in a new price density based on the quantity and arrival times of orders, which justifies the interpretation of $T_t$ as ``\emph{business time}'' or  ``\emph{stochastic clock}''; the instantaneous variation of $T_t$ is hence the ``\emph{activity rate}'' at which the market reacts to the arrival of information. Further advances were made by Carr and Wu (2004),  who demonstrated that much more general time changes are potential candidates for asset price modeling, and effectively recovered many models from the standard literature by using a time-changed representation.
 
 However, not all the possibilities in time change modeling have been exhausted by the current research. For example, the stochastic volatility model with jumps (SVJ)  treated among the others by  Bates (1996), and the  stochastic volatility model with jumps and stochastic jump rate (SVJSJ) studied by  Fang (2000), although retaining a time re-scaled structure, are not time-changed L\'{e}vy processes as they are understood in Carr and Wu (2004). Indeed, in these two classes of models the jump component does not follow the same time scaling as the continuous Brownian part: in the SVJ model the discontinuities have stationary increments, whereas in the SVJSJ model the jump rate is allowed to follow a stochastic process of its own. In other words, price models for which the ``stochastic clock'' runs at different paces for the ``small'' and ``big'' market movements have already been proposed and tested. The statistical analyses of Bates (1996) and Fang (2000) confirm that these models are capable of an excellent data fitting, in particular the SVJSJ model.
 As pointed out by  Fang (2000), there are various other reasons for conjecturing a stochastic jump rate. If activity rates are to be interpreted as the frequencies of arrival of new market information, it seems unlikely that such rates could be taken as constant, as this would imply a constant information flow. Moreover, a constant jump rate implies stationary jump risk premia, which also seems unreasonable. Another stylized fact potentially captured by a model with a stochastic jump rate is the slow convergence of returns to the normal distribution, which is not a feature of stationary jump models.  Despite all these considerations, the idea of a stochastic jump rate has never really caught on.

On the other hand,  if we want to exogenously model the market activity, the hypothesis of independence between the jump and the continuous instantaneous rates  as assumed by Fang (2000) seems to be overly simplistic, as in reality the two corresponding information flows  
  may very well influence each other.  For example, a crash or soaring of the market certainly impacts the day-to-day volume of trading in the days following such an event. Conversely, a sustained high activity trend over a long period, typically associated to falling prices, may eventually lead to a sudden, panic-driven plunge in the shares' value. These and similar scenarios provide heuristic arguments for the assumption of a \emph{correlated} pair of activity rates; nevertheless, to the best of this author's knowledge, asset price models capturing this feature are not yet present in the literature.

Motivated by these arguments,  the natural question that arises is whether it is possible to manufacture consistent general time-changed price processes in which the continuous and discontinuous parts of the underlying L\'{e}vy model follow two \emph{different}, possibly correlated, stochastic time changes. We shall show that the answer is affirmative. 
The family of stochastic processes we investigate is that obtained by time-modifying the continuous and jump parts of a given  L\'{e}vy process $X_t$ by two, in principle dependent, stochastic time scalings $T_t$ and $U_t$ satisfying a certain regularity condition (definition \ref{continuity}). We call such processes \emph{decoupled time changes}. In a formula:
\begin{equation} X_{T,U}:=X^c_{T_t}+X^d_{U_t},  \end{equation}
where $X^c_t$ and $X^d_t$ represent respectively the Brownian and jump components of $X_t$. 
 
The decoupled time-changed (DTC) approach suggested allows to embed in a unifying mathematical framework  many previously-known models or classes of models, so that the DTC theory offers a natural generalization of some of the extant asset modeling research. In addition, the assumption of a pair of dependent activity rates can be captured by making use of decoupled time changes. To our knowledge, this last feature is new to the asset modeling literature. In section 7 we shall illustrate a practical example of a model having this property by considering an explicit asset evolution based on a multivariate version of the square-root process  known as the \emph{Wishart process} (e.g. Bru 1991; Gourieroux 2003; da Fonseca \emph{et al.} 2007), which we use to model the instantaneous activity rates. In section \ref{numericalwish} we provide some descriptive analysis showing that this model retains an increased flexibility for the purpose of modeling the volatility skew, compared to some popular existing jump asset price models.

A prior study supporting the financial use of DTC L\'evy is given by the work of Huang and Wu (2004). The authors conduct a specification analysis of an SDE whose solution is equivalent to a DTC L\'evy-based asset evolution as defined in this paper, give an overview of the ``nesting'' of models allowed by this setup, and then discuss the impact of various time change specifications in parameter estimation. However, their work does not provide any theoretical justification for the martingale property of the general asset price equation used. Furthermore, they do not explore the issue of dependence between the activity rates as related to analytical tractability, a natural ground of analysis provided by the model.  Indeed, to ensure the existence of a semi-closed pricing equation for the ``SV4'' model in section E, the authors have to revert to a model with independent activities. By providing a general theoretical framework for DTC-based L\'evy models, and devising an analytical DTC specification with true dependence between the stochastic volatility and the jump rate, this paper addresses both of these shortcomings.

\bigskip

From the perspective of the valuation of financial derivatives, the aim of this work is to gain some understanding of the impact on derivative pricing of the interactions between the volatility and the price of the underlying.
To give an example, a recent market innovation is that of derivatives and investment strategies based on volatility-modified versions of plain vanilla products.
Such contracts are able to replicate classic European payoffs under a perfect volatility foresight; at the same time, the component of the price that is due to a vega excess may be reduced by using the realized volatility as a normalizing factor. One example of such a product is the \emph{target volatility option}. A target volatility (call) option (TVO) pays at maturity $t$ the amount: 
\begin{equation} F(S_t, RV_t)=\frac{\target}{\sqrt {RV_t}} (S_t-K)^+,
  \end{equation}
for a strike price $K$ and  a \emph{target volatility} level $\target$, a constant that is written in the contract. Intuitively, the closer the realized volatility $RV_t$ is to $\target$, the more this claim will behave like a call option; however, the presence of $RV_t$ in the denominator decreases the sensitivity of $F$ to a change in volatility. 
It can be shown (Di Graziano and Torricelli,  2012) that the price of an at-the-money TVO is approximately that of an at-the-money Black-Scholes call having implied volatility $\target$; such a constant thus represents the subjective volatility view of an investor, which may very well differ from the spot volatilities implied by the market. 

 In view of this increasing interaction between volatility and stock in the financial assets available in the market, being able to efficiently price derivatives like the TVO and other similar products is gaining relevance. The pricing problem of hybrid volatility/asset derivatives, with special emphasis on the target volatility option, has  already been addressed by Di Graziano and Torricelli (2012) for a zero-correlation stochastic volatility model, and by Torricelli (2013), for a general stochastic volatility model. However,  to our knowledge, a comprehensive pricing framework comparable to those available for plain vanilla derivatives (e.g Carr and Madan 1999; Lewis 2000; Lewis 2001; Carr and Wu 2004) has not yet been developed: this is one limitation we intend to overcome with this paper. 
 The pricing technique we use is a well-known approach yielding a semi-closed analytical formula for the derivative price through an inverse Fourier integral.
It should be apparent that in all the models we shall investigate there is no particular reason not to consider mixed price and volatility payoffs as the default input of pricing models e.g. for numerical implementation, as the introduction of the realized volatility does not cause the Fourier-inversion technique to break down. Clearly, pricing both vanilla and pure volatility derivatives  is still possible within this framework, since the corresponding payoff types can be regarded as particular cases of our more general setting.

\bigskip

The remainder of the paper is organized as follows. In section 2 we lay out the assumptions; in section 3 we derive martingale properties for a decoupled time-changed L\'{e}vy model. Section 4  shows the fundamental relation linking the characteristic function of the log-price and its quadratic variation and the joint Laplace transform of the time changes as computed in an appropriate measure.  Section 5 is dedicated to the derivation of a pricing formula for products whose payoffs depend jointly on $S_t$ and $TV_t$. We devote section 6 to characterizing the DTC structure of a number of known models and computing the joint characteristic function discussed in section 4 for each such model. In section 7 we introduce an exemplifying model of DTC type featuring correlation between the time changes/activity rates. In section 8 we implement our formulae to valuate different asset and volatility derivatives under various market conditions and asset price models.  In this numerical section we also perform a sensitivity analysis of the model introduced in section 7 with respect to a correlation parameter. Finally, in section 9 we briefly summarize our work. The more technical proofs have been placed in the appendix.

\section{Assumptions and notation}

As customary, our market is represented by a filtered probability space $(\Omega, \mathcal F, \mathcal F_t, \mathbb P)$ satisfying the usual conditions. Throughout the paper we will assume that there
exists a money market account process paying a constant interest rate $r$.

Let $S_t$ be a non dividend-paying market asset. $\tilde{ S_t}$ will denote its time-zero discounted value $e^{-rt}S_t$. The \emph{total realized variance} on $[0, t]$ of $S_t$ is by definition the \emph {quadratic variation} of the natural logarithm of $S_t$, that is:
\begin{equation}\label{total}
TV_t:=\langle \log S \rangle_t=\lim _{|\pi| \rightarrow 0} \sum_{t_i \in \pi } | \log S_{t_{i+1}} -\log S_{t_{i}} |^2.
\end{equation}
The limit runs over the supremum norm of all the possible partitions $\pi$ of $[t_0, t]$. The \emph{total realized volatility} is $\sqrt{TV_t}$. The \emph{period realized variance} and \emph{volatility} (or realized variance/volatility \emph{tout court}) are given respectively by $RV_t=TV_t/t$ and $\sqrt{RV_t}$.
If $X_t=\log S_t$ is a semimartingale, by taking the limit in (\ref{total}) it is easy to check that:
\begin{equation} \langle X \rangle_t= X^2_t-2 \int_{0}^t X_{u^-}dX_u.  \end{equation}
The algebra of the square matrices of order $n$ with real entries is indicated by $\mathcal M_n(\mathbb R)$ and the sub-algebra of the symmetric matrices by $Sim_n(\mathbb R)$. Matrix product is denoted by juxtaposition; the scalar product between vectors is either indicated by multiplying on the left with the transposed vector $\cdot^T$ or by the usual dot notation. The symbol $Tr$ stands for the trace operator.

If $J$ is an absolutely continuous random variable, we denote by $f_J(x)$ its probability density function and by $\phi_J(z)$ its \emph{characteristic function} \begin{equation}\phi_J(z):=\e[e^{i z^T J}].  \end{equation}
For a Fourier-integrable function $f: \mathbb C^n \rightarrow \mathbb C$ its Fourier transform will be denoted $\hat f$. For a complex-valued function or a complex plane subset, $\cdot^*$ indicates the complex conjugate function or set.

When we say that a process is a martingale we mean a martingale with respect to its natural filtration. The notation for the conditional expectation of a stochastic process $X_t$ at time $t_0<  t$ with respect to $\mathcal F_{t_0}$ is $\e_{t_0}[\, \cdot \,]$. When the distribution of a process $X_t$ depends on other state variables $x_t$ (as in the case of a Markov process) the latter are implicitly understood to be given at time $t_0$ by $x_{t_0}$.
If $X_t$ is a process admitting conditional laws, the space of the integrable functions in the $t_0$-conditional distribution of $X_t$ at time $t_0<t$ is indicated by $L^1_{t_0}(X_t)$. The notation for the bilateral Laplace transform of the distribution of $X_t$ conditional on $t_0 <t$ is: 
\begin{equation}
\mathcal L_{X}(z)=\mathbb E_{t_0}[e^{-z^T X_{t}}]
\end{equation}
where for brevity we drop the dependence on $t$ and $t_0$ on the left hand side. The stochastic process of the left limits of $X_t$ is indicated $X_{t^-}$. The symbol $\Delta X_t$ stands for the difference $X_{t}-X_{t^-}$ or $X_t-X_{t_0}$ for some prior time $ t_0< t$. Equalities are always understood to hold modulo almost sure equivalence.

If $X_t$ is an $n$-dimensional  L\'{e}vy process, the \emph{characteristic exponent} of $X_t$ is the complex-valued function  $\psi_X: \mathbb C^n \rightarrow \mathbb C$ such that:
\begin{equation}\e[e^{i \theta^T X_t}]=e^{t \psi_X(\theta)}  \end{equation}
where $\theta$ lies in the subset of $\mathbb C^n$ and where the left-hand side is finite.

For a given choice of \emph{truncation function} $\epsilon(x)$ (that is, a bounded function which is {O}$(|x|)$ around 0) the characteristic exponent has the unique \emph{L\'{e}vy-Khintchine} representation:
\begin{equation}
\psi_X(\theta) = i  \mu^T_{\epsilon} \theta  -    \frac{\theta^T \Sigma \theta}{2}  + \int_{\mathbb R^n} (e^{i \theta^ T x}-1- i  \theta ^ T   {\epsilon(x)}) \nu(dx),
 \end{equation}
where $\mu_{\epsilon} \in \mathbb R^n$, $\Sigma$ is a non-negative definite $n \times n$ matrix with real-valued entries, and  $\nu(dx)$ is a Radon measure on $\mathbb R^n$ having a density function that is integrable at $+ \infty$ and O$(|x|^2)$ around 0.  We shall make the standard choice $\epsilon(x)=x \I_{|x| \leq 1}$ and drop the dependence of $\mu$ on $\epsilon$. The triplet $(\mu, \Sigma, \nu)$ is then called the \emph{characteristic triplet} or the \emph{L\'{e}vy characteristics} of $X_t$.

\medskip

A \emph{stochastic time change} $T_t$ is an $\mathcal F_t$-adapted c\`{a}dl\`{a}g stochastic process, increasing and almost surely finite, such that $T_t$ is an $\mathcal F_t$-adapted stopping time for each $t$.  
   The \emph{time change} of an $n$-dimensional  L\'{e}vy process $X_t$ according to $T_t$ is the $\mathcal F_{T_t}$-adapted process $Y_t:=X_{T_t}$.

\section{Definition, martingale relations and asset price dynamics}

 In this first section we introduce the notion of DTC L\'evy process and devise an exponential martingale structure naturally associated to it. This construct serves a twofold purpose. In first place it allows to formulate a DTC-based asset price evolution whose discounted value enjoys the martingale property. According to general theory, this in turn enables to postulate the existence of a risk-neutral measure that correctly prices the market securities. Secondly, it defines a class of complex-valued martingales pivotal for the computations of the next section.

\medskip

Let $\mathcal B$ be the space of the $n$-dimensional $\mathcal F_t$-supported Brownian motions with drift starting at $0$, and  $\mathcal J$ be the space of the $\mathcal F_t$-supported \emph{pure jump L\'{e}vy processes} starting at 0, that is, the class of the c\`{a}dl\`{a}g   $\mathcal F_t$-adapted processes with stationary and independent increments orthogonal\footnote{Two processes $X_t$ and $Y_t$ are said to be \emph{orthogonal} if $\langle X, Y \rangle_t=0$ for all $t>0$.}  to all the elements of  $\mathcal B$.

 Every L\'{e}vy process $X_t$ such that $X_0=0$ can be decomposed as the orthogonal sum 
\begin{equation} X_t= X^c_t + X^d_t,  \end{equation}
with $X^c_t \in \mathcal B$ and $X^d_t \in \mathcal J$. We shall refer to $X^c_t$ and $X^d_t$ respectively as the \emph{continuous} and \emph{discontinuous} parts of $X_t$. 

Time changes are fairly general mathematical objects, so we have to introduce some additional requirements in order for our discussion to proceed. One property we shall assume throughout is the so-called \emph{continuity with respect to the time change}.

\begin{definition}\label{continuity} Let $T_t$ be a time change on a filtration $\mathcal F_t$.  An $\mathcal F_t$-adapted process $X_t$ is said to be $T_t$-\emph{continuous}\footnote{Jacod (1979)  uses $T_t$-\emph{adapted}, and $T_t$-\emph{synchronized} is sometimes found; however, $T_t$-continuous is also common in the literature, and in our view less ambiguous.} if it is almost-surely constant on all the sets $[T_{t^-}, T_t]$.
\end{definition}

Obviously, a sufficient condition for $T_t$-continuity is the almost sure continuity of $T_t$. 
Hence, of particular relevance is the class of the \emph{absolutely continuous} time changes, with respect to which every stochastic process is continuous. Given a pair of \emph{instantaneous rate of activity} processes, that is, two exogenously-given c\`{a}dl\`{a}g positive stochastic processes $(v_t, u_t)$,
 valid time changes are given by the pathwise integrals: 
\begin{equation} \label{T} T_t=\int_0^t v_{s^-}ds,  \end{equation}
\begin{equation} \label{U}  U_t=\int_0^t u_{s^-}ds.  \end{equation}
The processes $v_t$ and $u_t$ describe the instantaneous impact of market trading and information arrival on the price, and formalize the concept  of ``business activity'' over time.

\medskip

A decoupled time change of a L\'{e}vy process is the sum of the (ordinary) time changes of its continuous and discontinuous part.

\begin{definition}\label{DTC} Let $X_t$ be an  $n$-dimensional L\'{e}vy process and $T_t$, $U_t$ two time changes such that $T_t$ is almost surely continuous and $X^d_t$ is $U_t$-continuous.  Then:
\begin{equation}X_{T,U}=X^c_{T_t} + X^d_{U_t}  \end{equation}
is the \emph{decoupled time change} of $X_t$ according to $T_t$ and $U_t$. 
\end{definition}

By (Jacod, 1979), corollaire 10.12, a first important property of $X_{T,U}$ is that it is an $\mathcal F_{T_t \wedge U_t}$ semimartingale.
 To avoid degenerate cases, in all that follows we always assume $T_t$ and $U_t$ to be such that $X^c_{T_t}$ and $X^d_{U_t}$ are Markov processes\footnote{In general, time changes of Markov processes are not Markovian; by using Dambis, Dubins and Schwarz's theorem (Karatzas and Shreve 2000,  theorem 4.6) one can manufacture a large class of counterexamples by starting from any continuous martingale that is not a Markov process.}.

\smallskip

We now define the class of exponential martingales canonically associated with $X_{T,U}$ when the time changes are absolutely continuous. The following proposition represents the main theoretical tool of this paper:

 \begin{proposition}\label{prop1} Let $X^1_t$ be an $n$-dimensional Brownian motion with drift and $X^2_t$ a pure jump L\'evy process in $\mathbb R^n$. Let $T^1_t$ and $T^2_t$ be two absolutely continuous time changes, set $X_t=X^1_t+X^2_t$ and $T_t= (T_t^1, T_t^2) $; define $X_{T_t}:=X^1_{T^1_t} + X^2_{T^2_t}$ and denote by  $\Theta \subseteq \mathbb C^n$ the domain of definition of $\e[\exp(i \theta^T  X_t)]$. The process:
\begin{equation}\label{M}M_t(\theta, X_t, T_t)=\exp \left(i \theta ^ T X_{T_t}-  T_t^1 \psi_{X^1}(\theta) - T_t^2 \psi_{X^2}(\theta) \right)\end{equation}
 is a local martingale, and it is a martingale if and only if $\theta \in \Theta_0$, where:
\begin{equation}\label{familiar}\Theta_0=\{ \theta \in \Theta \mbox{ \upshape{such that} }  \e[M_t(\theta, X_t, T_t)]=1,  \: \forall t \geq 0 \}. \end{equation}
\end{proposition}

\medskip

When $T^1_t=T^2_t$, the exponential $M_t$ reduces to an ordinary time change of the type discussed by Carr an Wu (2004). Even in this simple case proposition \ref{prop1} is not a consequence of applying Doob's optional sampling theorem to the martingale $Z_t(\theta)=\exp(i \theta^T X_t-t \psi_X(\theta))$, because the latter is not necessarily  uniformly integrable. Indeed, time-transforming a process always preserves the semimartingale property, but the martingale property is only guaranteed to be maintained for uniformly integrable martingales; 
 an actual example of an asset model of the form $Z_{T_t}$  that is a strict supermartingale was given by Sin (1998). Hence, the set $\Theta_0$ may very well trivialize to the empty set. This demonstrates that some choices of time changes are inherently unsuitable for time-changed asset price modeling. In the case of $X_{T_t}$ being a one-dimensional Brownian integral, sufficient requirements for (\ref{familiar})  to be satisfied are the well-known \emph{Novikov} and \emph{Kazamaki} conditions (Karatzas and Shreve 2000, chapter 3), under which the set $\Theta_0$ contains the whole of $\mathbb R^n$. The set $\Theta_0$ is sometimes called the \emph{natural parameter set}.

\bigskip

Having obtained martingale relations for a stochastic exponential involving $X_{T,U}$, the risk-neutral dynamics for a DTC L\'{e}vy-driven asset are defined in the usual fashion. We have the following immediate corollary to proposition \ref{prop1}:

\begin{corollary}\label{AssetDynamicsCor} 
Let $X_t$ be a scalar L\'{e}vy process of characteristic triplet $(\mu, \sigma^2, \nu)$ and $(T_t, U_t)$ a pair of absolutely continuous time changes. For a spot price value $S_0$ let, for $t>0$:\begin{equation}\label{AssetDynamics}
S_t=S_{0}\exp(r t+i \theta_0 X_{T,U}-T_t \psi^c_X(\theta_0)-U_t \psi^d_X(\theta_0))=S_{0}e^{r t}M_t( \theta_0, X_t^c + X_t^d, (T_t,U_t))
\end{equation}
with $\theta_0 \in \Theta_0$ being such that (\ref{AssetDynamics}) is a real number. The discounted process $\tilde S_t$ is a martingale, and therefore $S_t$ is a price process consistent with the no-arbitrage condition.
\end{corollary} 
 The stochastic process in (\ref{AssetDynamics}) is the fundamental asset model we shall use throughout the rest of the paper.

 \section{Characteristic functions and the leverage-neutral measure}\label{charsection}

Characteristic functions of state variables are the essential component of the Fourier-inverse pricing methodology, because state price densities are analytically available only for a small number of models; in contrast, characteristic functions are computable in closed form  in many instances (e.g. exponential L\'evy models, Ito diffusions). This effectively means that in order to compute expectations (prices), the standard approach is not to integrate a payoff against a density function, but rather the payoff's Fourier transform against the characteristic functions of the price transition densities. Famous examples include the FFT paper by Carr and Madan (1999), Lewis's book (2000) and subsequent paper (2001).

 The transform we are interested in is one associated with the price process (\ref{AssetDynamics}). Compared to the usual inverse Fourier/Laplace framework  
 the characteristic function we shall consider is not that of the discounted log-price alone, but one that incorporates also the quadratic variation of the log-process. Indeed, just as the characteristic function of the log-price  allows for the derivation of pricing formulae for contingent claims $F(S_t)$, the joint characteristic function of $\log( \tilde S_t)$ and $TV_t$ permits the valuation of payoffs of the form $F(S_t, TV_t)$.  This has been envisaged before by Carr and Sun (2007).

In the present section we compute this transform. 
There are normally two ways of computing characteristic functions/Laplace transforms of log-price densities. One is the analytical approach, which is popular for example in affine models, when the problem is ultimately reduced to solving a certain system of ODEs. The other is the probabilistic approach, in which the characteristic function of the log-price is linked with the Laplace transform of the integrated driving factors (where available) and then a change of measure is performed to keep track of correlations. As Carr and Wu (2004) show this technique is intimately connected with time-changed asset modeling; in what follows we extend it to the case of the underlying being modeled through a full DTC L\'{e}vy process.

\medskip

First of all we must verify that the quadratic variation operator respects the additivity  and time-changed structure of $X_{T,U}$. We have the following ``linearity/commutativity property'', of independent interest:

\begin{proposition}\label{linearcommutativity}
A DTC L\'{e}vy process $X_{T,U}$ is such that $X^c_{T_t}$ and $X^d_{U_t}$ are orthogonal.  Furthermore, its quadratic variation satisfies:
\begin{equation}\label{ortocomm}
 \langle X_{T,U} \rangle_t = \langle X ^c \rangle_{T_t} + \langle X^d \rangle _{U_t}=\Sigma{T_t} + \langle X^d \rangle  _{U_t}.
\end{equation}
That is, 
the quadratic variation of $X_{T,U}$ is the sum of the time changes of the quadratic variations of its continuous and discontinuous part.
\end{proposition}

Crucially, the processes $X^c_{T_t}$ and $X^d_{U_t}$ are orthogonal but not independent.   Without the $T_t$ and $U_t$-continuity assumption, this proposition would be false: a counterexample is provided in the appendix. Proposition \ref{linearcommutativity} ensures that, in presence of time continuity of the L\'evy continuous and jump parts with respect to the corresponding time changes, the quadratic variation of a DTC L\'evy process is itself of DTC-type.

 Now, for $S_t$ as in (\ref{AssetDynamics}) define:  \begin{equation}\label{rln}
  \Phi_{t_0}(z,w)=\e_{t_0}[ \exp(iz \log( \tilde S_t/S_{t_0} ) + i w  \, (TV_t-TV_{t_0} ))].
 \end{equation}
 For each $z,w$ for which the right hand side is finite,  $\Phi_{t_0}(z,w)$ is the Fourier transform is the joint transition function from time $t_0$ to time $t$ of $\log( \tilde S_t)$ and $TV_t$.  The characteristic function $\Phi_{t_0}(z,w)$ can be completely characterized in terms of the L\'evy  triplet of $X_t=X^c_t+X^d_t$ and the joint $\mathbb Q(z,w)$-distribution of $T_t$ and $U_t$ by virtue of the following proposition.

\begin{proposition}\label{DTCchar} Let $S_t$ be an asset evolution as in corollary \ref{AssetDynamicsCor}, 
and define the family of $\mathbb P$ absolutely-continuous measures $\mathbb Q(z,w) << \mathbb P $  having Radon-Nikodym derivative:
 \begin{equation}\label{measurechange}\frac{d \mathbb Q(z,w)}{d \mathbb P}=M_t((i z \theta_0,i w \theta_0), C_t + D_t, (T_t, U_t))\end{equation}
where $C_t=(X^c_t, 0)$, $D_t=(X^d_t, i \theta_0\VolX^d)$  and $M_t$ is given by (\ref{M}). For all $(z,w)$ such that $(i z \theta_0,i w \theta_0) \in \Theta_0$, the characteristic function in (\ref{rln}) is given by: 

\begin{equation}\label{generallaplace}\Phi_{t_0}(z,w)=\mathcal L_{\Delta T, \Delta U}^{\mathbb Q}  (\zeta(z,w, \mu, \sigma, \theta_0), \xi(z,w, \nu, \theta_0) ),
\end{equation}
with the notation $\mathcal L_{\Delta T, \Delta U}^{\mathbb Q}(\cdot )$ indicating the bilateral Laplace transform  of the conditional joint distribution of $T_t - T_{t_0}$ and $U_t-U_{t_0}$ taken under the measure $\mathbb Q(z,w)$, and 
\begin{align}\label{parameters} \zeta(z,w, \mu, \sigma, \theta_0) & = \theta_0 \mu  (z - i z) - \theta^2_0 \sigma^2 (  z^2 + i z     - 2 i w )/2, \\
\xi(z,w, \nu, \theta_0) & = i z \psi_X^d(\theta_0) - \psi_D(i z \theta_0,i w \theta_0).
\end{align}

\end{proposition}

Notice that unlike the density processes used for standard num\'eraire changes, the new distributions implied by (\ref{measurechange}) also accounts for the quadratic variation as a factor.  If we assume $T_t$ and $U_t$ to be pathwise integrals of the form (\ref{T}) and (\ref{U}), it is possible to interpret the Laplace transform (\ref{generallaplace}) as being the analogue of a bivariate bond pricing formula, where the short rates are replaced by the instantaneous activity rates, and the pricing measure is not given once and for all, but varies as an effect of the correlation of $(v_t ,u_t)$ with the underlying L\'evy process. The financial insight of (\ref{generallaplace}) is that it is possible to formulate a valuation theory by just modeling the joint term structure of the activity rates $v_t$ and $u_t$ and their correlation with the stock. 

Also of interest is the interpretation of the measure $\mathbb Q(z,w)$. Let us consider the special case of $X_t$ being independent of $T_t$ and $U_t$. In such a case it is straightforward to prove, by using the laws of the conditional expectation, that one obtains (\ref{generallaplace}) with $\mathbb Q(z,w)=\mathbb P$. Therefore, whenever there is no dependence between the time changes and the underlying L\'{e}vy process, no change of measure is needed in order to extract the characteristic function $\Phi_{t_0}(z,w)$. In contrast, in the presence of correlation between $X_t$ and the time changes, the family $\mathbb Q(z,w)$ gives a measurement of the impact of leverage on the price densities. Furthermore, in some well-behaved cases this change of measure can be absorbed in the $\mathbb P$-dynamics of the asset through a suitable parameter alteration of the distributions of $T_t$ and $U_t$. In accordance with Carr and Wu (2004), we call  $\mathbb Q(z,w)$ the \emph{leverage-neutral measure} and $\Phi_{t_0}(z,w)$ the \emph{leverage-neutral characteristic function}. Just as prices in a risky market can be equivalently computed in a risk-neutral environment according to a different price distribution, valuations in the presence of leverage can be performed in a different economy with no leverage by means of an appropriate distributional modification.

\section{Pricing and price sensitivities}

The characteristic function found in section \ref{charsection} is needed to obtain analytical formulae for the valuation of European-type derivatives with a sufficiently regular payoff $F$. In the present section we find a semi-analytical formula based on an inversion integral that extends the standard Fourier-inversion machinery to our multivariate context. 

Recall that since all the involved processes are Markovian, it makes sense to treat $\Phi_{t_0}(z,w)$ like a Gauss-Green integral kernel depending only on some given initial states at time $t_0$. The following proposition extends both theorem 1 of  Lewis (2000) and proposition 3.1 of Torricelli (2013):

\begin{proposition}\label{LewisThm}Let $Y_t=\log S_t$, with $S_t$ given in corollary \ref{AssetDynamicsCor}. Let $F(x,y) \in L^1_{t_0}(Y_t, \VolY)$  for all $t_0<t$,  be a positive 
 payoff function having analytical Fourier transform $\hat F(z,w)$ in a multi-strip \begin{equation}\label{LewisEq}\Sigma_F=\{ (z,w) \in  \Theta, \: \alpha_1 <\mbox{ Im}(z)< \alpha_2, \:   \beta_1<\mbox{ Im}(w)< \beta_2, \, \alpha_1, \alpha_2, \beta_1, \beta_2 \in  \overline{\mathbb R} \}.\end{equation}  Suppose further that $\Phi_{t_0}(z,w)$ is analytical in \begin{equation}\Sigma_\Phi=\{ (z,w) \in \Theta, \: \gamma_1 < \mbox{ Im}(z)< \gamma_2, \,  \eta_1< \mbox{ Im}(w)<\eta_2,  \: \gamma_1, \gamma_2, \eta_1, \eta_2 \in \overline{\mathbb R} \}\end{equation} and that $\Phi_{t_0}(z,w) \in  L^1(dz \times dw)$. If $\Sigma_F \cap \Sigma_\Phi^* \neq \emptyset$, then for every multi-line: 
\begin{equation}L_{k_1,k_2}=\{ (x+ ik_1, y+ ik_2), (x,y) \in \mathbb R^2   \} \subset \Sigma_F \cap \Sigma^*_\Phi   \end{equation}
 we have that the time-$t_0$ value of the contingent claim $F$ maturing at time $t$ is given by:
\begin{align}\e_{t_0} [e^{-r(t-t_0)} & F(Y_t, \VolY)] =  \frac{ e^{-r(t-t_0)} }{4 \pi^2} \cdot  \nonumber
 \\ \label{Lewis} & \int_{i k_1-\infty}^{i k_1 +\infty}  \int_{i k_2-\infty}^{i k_2+\infty}  e^{-i w \langle Y \rangle _{t_0}} S_{t_0}^{-i z} e^{-r(t-t_0)i z} \Phi_{t_0}(-z, -w) \hat F(z,w)dz dw.  \end{align}
\end{proposition}

\medskip

It is clear that modifying the asset dynamics specifications only acts on $\Phi_{t_0}$, whereas changing the claim to be priced only influences $\hat F$. Also, by setting either variable to 0, we are able to extract from (\ref{Lewis}) the prices of both plain vanilla and pure volatility derivatives. For example, the pricing integrals by Lewis (2000, 2001) are special cases of the above equation when $F$ does not depend on the realized volatility and $\Phi_{t_0}$ is either obtained from a diffusion or a L\'{e}vy process. Moreover, equation (3.10) of Torricelli (2013) is recovered when $S_t$ is assumed to follow a stochastic volatility model.

In addition, this representation is useful if we are interested in the sensitivities of the claim value with respect to the underlying state variables. Let us consider for instance the Delta (sensitivity with respect to the change in the value of the underlying) and Gamma (sensitivity with respect to the rate of change in the value of the underlying) of valuations performed through formula (\ref{Lewis}).  Call $I(r,t_0, z,w)$ the integrand on the right hand side of (\ref{Lewis}); by differentiating (if possible) under the integral sign and noting that $\Phi_{t_0}$ has no dependence on $S_{t_0}$ we see that:
\begin{align}\Delta_t:=\frac{\partial}{\partial S} \e_{t_0} &[e^{-r(t-t_0)}F(Y_t, \VolY)]=-\frac{ e^{-r(t-t_0)} }{4 \pi^2} \int_{i k_1-\infty}^{i k_1 +\infty}  \int_{i k_2-\infty}^{i k_2+\infty} \frac{i z}{S_{t_0}} I(r,t_0, z,w)dz dw,  
\end{align}
and
\begin{align}\Gamma_t:=\frac{\partial^2}{\partial S^2} \e_{t_0} &[e^{-r(t-t_0)}F(Y_t, \VolY)]=\frac{ e^{-r(t-t_0)} }{4 \pi^2} \int_{i k_1-\infty}^{i k_1 +\infty}  \int_{i k_2-\infty}^{i k_2+\infty} \frac{i z-z^2}{S^2_{t_0}} I(r,t_0, z,w)dz dw.  
\end{align}

\emph{Mutatis mutandis} we can repeat this argument if we want to determine the price sensitivity with respect to the quadratic variation $\VolY$. Finally, as $\Phi_{t_0}(z,w)$ could also depend on other variables (e.g. an instantaneous rate of activity $v_{t_0}$) known at time $t_0$, by calling $\nu$ one such variable we have: 
\begin{align}\mathcal V_t:= \frac{\partial}{\partial \nu }\e_{t_0} &[e^{-r(t-t_0)}F(Y_t, \VolY)] =\frac{ e^{-r(t-t_0)} }{4 \pi^2}  \cdot  \nonumber  \\ & \int_{i k_1-\infty}^{i k_1 +\infty}  \int_{i k_2-\infty}^{i k_2+\infty} e^{-i w \langle Y \rangle _{t_0}} S_{t_0}^{-i z} e^{r(t-t_0)i z} \frac{\partial \Phi_{t_0}}{\partial \nu}(-z, -w)   \hat F(z,w)dz dw. 
\end{align}
This is especially well-suited to the case in which $\Phi_{t_0}(z,w)$ is exponentially-affine in $\nu$, i.e. \begin{equation}\Phi_{t_0}(z,w)=\exp(A(z,w, t-t_0)+B(z,w, t-t_0)\nu_{t_0}),  \end{equation} for some functions $A$ and $B$, when we have:
\begin{equation} \frac{\partial \Phi_{t_0}}{\partial \nu}(-z, -w)  =B(-z,-w, t-t_0)\Phi_{t_0}(-z, -w).  \end{equation}
In section 6 we shall explicitly calculate $\Phi_{t_0}$ for a number of decoupled time-changed models.

\section{Specific model analysis}
 
We now determine the DTC L\'{e}vy structure (\ref{AssetDynamics}) of various popular asset price processes, and find for each of them the corresponding leverage-neutral characteristic function $\Phi_{t_0}(z,w)$. Such a derivation allows for the full implementation of equation (\ref{Lewis}) for the pricing of joint asset and volatility derivatives in all the cases we deal with. What the discussion below should make apparent is that decoupled time changes offer a natural unifying framework for \emph{a priori} different strains of financial asset models (e.g. continuous/jump diffusions, jump diffusions with stochastic volatility, L\'{e}vy processes). By classifying models through their DTC structure it is possible to recognize a ``nesting'' pattern linking different models, in which some can be considered particular cases of some others. This is of use for numerical purposes: as we shall see in section \ref{numerical}, one single implementation of equation (\ref{Lewis}) can produce values for several models,  each one obtained by using a different instantiation of the code.  Four categories of asset models are discussed: standard L\'{e}vy processes,  stochastic volatility models, DTC jump diffusions and general exponentially-affine asset models. Throughout this section we assume $\theta_0=-i$ in (\ref{AssetDynamics}), so that $i \theta_0=1$ and all of the involved processes are real-valued. The domain $\Theta_0$ where the price processes are martingales is the whole complex plane, provided that the stochastic time changes and the underlying L\'evy components are sufficiently well-behaved, in the sense of the usual theory (e.g. Novikov condition for the stochastic variance, decay of the jump distributions, integrability conditions on the L\'evy measure etc.)



\subsection{L\'evy processes}\label{Levy}

In case of the L\'evy process the DTC structure coincides with the underlying L\'evy process. To determine $\Phi_{t_0}(z,w)$ no change of measure is necessary, so this function represents the joint conditional characteristic function of the log-price and its quadratic variation as given in the risk-neutral measure. Below, are provided the calculations for some popular models. 
 
\subsubsection{Black-Scholes model}

The classic SDE with constant parameters $\sigma, r$ driven by a Brownian motion $W_t$:
\begin{equation}dS_t=r S_t dt+\sigma S_t d W_t  \end{equation}
can be trivially recovered from (\ref{AssetDynamics}) by setting the triplet for the underlying  L\'{e}vy process $X_t=X^c_t$ to be $(0, \sigma,0)$ and letting $T_t=t$, $U_t=0$, so that $X_{T, U}=X_t$. From (\ref{generallaplace}), we immediately have:
\begin{align}\label{BS}
\Phi_{t_0}(z,w)= & \exp(-(t-t_0)\sigma^2(z^2 +i  z - 2 iw )/2).
\end{align}

\subsubsection{Jump diffusion models}\label{merton}

In their classic works, Merton and Kou (1976, 2002) proposed modeling the log-price dynamics as a finite-activity jump diffusion. 
The risk-neutral asset dynamics are given by:
\begin{equation}\label{explevy}dS_t=r S_{t^-} dt  + \sigma S_{t^-} d W_t + S_{t^-} (\exp(J)-1) d N_t  - \kappa \lambda S_{t^-} dt   \end{equation}
where $W_t$ is a standard Brownian motion, $N_t$ is a Poisson counter of intensity $\lambda$, and $J$ is the jump size distribution. $N_t$ and $W_t$ are assumed to be independent, and the compensator $\kappa$ equals $\phi_J(-i)-1.$ For the discounted price $\tilde S_t$ to be a true martingale, conditions on the asymptotic behavior of $f_J(x)$ must be imposed (see e.g. Cont and Tankov, 2003).
In the Merton model $J$ is normally distributed $J \sim \mathcal N( m, \delta^2)$, whereas Kou assumed for it an asymmetrically skewed double-exponential distribution, that is, the density function $f_J(x)$ as given by:
\begin{equation}
f_J(x)=\begin{cases}  \alpha pe^{-\alpha x}  \mbox{ if } x \geq 0 \\
  \beta q e^{ \beta x}  \mbox{ if } x < 0
\end{cases} 
\end{equation}
for $\alpha>1, \beta>0$ and $p+q=1$.

In these models no time change is involved, so $X_{T,U}$ coincides with the underlying L\'{e}vy process $X_t$ having characteristic triplet $(0, \sigma^2, \lambda f_J(x) dx)$.  To completely characterize $\Phi_{t_0}(z,w)$, observe that 
 $(X^d_t, \langle X^d \rangle_t)$ is just a bivariate compound Poisson process of joint jump density $f_{J,J^2}(x,y)$ and intensity $\lambda$, whence:
\begin{equation}\psi_D(z,w)= \lambda(\phi_{J,J^2}(z,w)-1),  \end{equation}
where $\phi_{J,J^2}(z,w)$ is the joint characteristic function of $J$ and $J^2$. We conclude from (\ref{generallaplace}) that $\Phi_{t_0}$ has the exponential structure:
\begin{align}\label{mertonPhi}\Phi_{t_0}(z,w)
 = & \exp(-(t-t_0)(\sigma^2(z^2/2 +iz/2 -2iw)/2 +\lambda( iz \kappa -\phi_{J,J^2}(z,w) + 1) ).
\end{align}
Now for the Merton model we have 
\begin{align}\label {mertonJumps}& \phi_{J,J^2}(z,w)
= \frac{\exp\left(\frac{i m z - \delta^2 z^2/2+ i m^2 w}{1-2 i \delta^2 w} \right)}{\sqrt{1-2 i \delta^2 w}},
\end{align}
and the integral converges for Im$(w)>-1/2 \delta^2$. For the Kou model we can write:
\begin{equation}\phi_{J,J^2}(z,w)=\phi_{J_+,J_+^2}(z,w)+\phi_{J_-,J_-^2}(z,w); \end{equation}
the characteristic function of the positive and negative parts are:
\begin{align}\phi_{J_+,J_+^2}(z,w)= 
  & \alpha  p \sqrt{\pi} e^{-\frac{( \alpha- iz)^2}{4 i w}} \left( \frac{ \mbox{ Erfc} \left( \frac{\alpha-iz}{2 \sqrt{-i w}} \right)}{ 2 \sqrt{-i w}}  \right), \\
 \phi_{J_-,J_-^2}(z,w)  = & \beta  q \sqrt{\pi} e^{-\frac{( \alpha- iz)^2}{4 i w}} \left( \frac{ \mbox{ Erfc} \left( \frac{\beta-iz}{2 \sqrt{-i w}} \right)}{ 2 \sqrt{-i w}}  \right), 
\end{align}
which both converge for Im$(w)>0$.

\subsubsection{Tempered stable L\'{e}vy and CGMY}

Another way of obtaining  L\'{e}vy distributions for the asset price is to directly specify an infinite activity  L\'{e}vy measure $\nu(dx)$. 
In such a case we have $X_{T,U}=X_t=X_t^d$,  with $X_t$ being a pure jump L\'{e}vy process of L\'{e}vy measure $\nu(dx)$. The two instances we analyze here are the tempered stable L\'{e}vy process (e.g. Cont and Tankov 2003), and the CGMY (Carr \emph{et al.} 2002) models. Both of these are obtained as an exponential smoothing of stable distributions; the latter can be viewed as a generalization of the former allowing for an asymmetrical skew between the distribution of positive and negative jumps. The  L\'{e}vy density for a CGMY process is:
\begin{equation}
\frac{d \nu(x)}{dx}=\frac{c_- e^{-\beta_- |x|}}{|x|^{1+ \alpha_-}}\I_{\{x<0\}}+ \frac{c_+ e^{-\beta_+ x}}{x^{1+ \alpha_+}}\I_{\{x \geq 0\}}.
 \end{equation}
which is well defined for all $c_+, c_-, \beta_+, \beta_- >0$, $\alpha_+, \alpha_-<2$. When $\alpha_+=\alpha_-$ one has the tempered stable process. For simplicity in what follows we assume  $\alpha_+, \alpha_- \neq 0,1$; for such values the involved characteristic functions still exist, but lead to particular cases. Since
 \begin{equation}\Phi_{t_0}(z,w)=\exp((t-t_0)\xi(z,w,\nu(x)dx, -i )) 
\end{equation}
to fully characterize $\Phi_{t_0}(z,w)$ we only need to determine $\psi^d_X(\theta)$ and $\psi_D(z,w)$. Letting $\gamma_1=\int_{-1}^{1} x d \nu(x)$,  the exponent $\psi^d_X(\theta)$ is given by the standard theory (Cont and Tankov 2003, proposition 4.2) as:
\begin{align}\psi^d_X(\theta)=&\gamma_1+\Gamma(-\alpha_+)\beta_+^{\alpha_+}c_+ \left( \left(1-\frac{i \theta}{\beta_+} \right)^{ \alpha_+}- 1 + \frac{i \theta \alpha_+}{\beta_+} \right)+ \nonumber \\  & \Gamma(-\alpha_-)\beta_-^{\alpha_-}c_- \left( \left(1+ \frac{i \theta} {\beta_-} \right)^{\alpha_-}- 1 - \frac{i \theta \alpha_-}{\beta_-} \right).   \end{align}

Set  $\gamma_2= \int_{-1}^1 x^2 d \nu(x)$;
the positive part $\psi^{+}_D$ of $\psi_D$ is then seen to be:
\begin{align} & \psi^{+}_D(z,w)= i z \gamma_1 + i w \gamma_2  +\int_{0}^{+ \infty} (e^{i z x + i w x^2}-1- (i z x + i w x^2))\frac{ c_+ e^{-\beta_+ x}}{x^{1+\alpha_+}} dx=   iz \gamma_1+i w \gamma_2+  \nonumber \\  & 
 i c_+ \beta_+^{\alpha_+}  \left(-w \frac{\Gamma(2-\alpha_+)}{2 i \beta_+^2} -z \frac{\Gamma(1-\alpha_+)}{2 i \beta_+} + i \Gamma(-\alpha_+) \right)  - c_+(\beta _+- i z)^{\alpha_+}  \left(\frac{i (\beta_+-iz)^2}{w} \right)^{-\alpha_+/2}   \cdot  \nonumber \\   & \left( \sqrt{\frac{i(\beta_+-iz)}{w}}  \Gamma \left(\frac{1}{2}-\frac{\alpha_+}{2} \right) \phantom{}_1 F_1\left[\frac{1-\alpha_+}{2}, \frac{3}{2} ,  \frac{i(\beta_+-iz)^2}{4w} \right]-   \right. \nonumber \\   \label{line2} & \phantom{000000000000000000000000000000000000000}  \left. \Gamma  \left(-\frac{\alpha_+}{2} \right) \phantom{}_1F_1\left[-\frac{\alpha_+}{2}, \frac{1}{2}, \frac{i(\beta_+-iz)^2}{4w} \right]  \right).
\end{align}
Here $\Gamma$ is the Euler Gamma function and $\phantom{}_1F_1$ the confluent hypergeometric function. The multi-strip of convergence of (\ref{line2}) is the set $\Sigma_\Phi= \{(z,w), \mbox{ Im}(w)> 0, \mbox{ Im}(z)>-\beta_+ \}$. The determination $\psi^{-}_D$ has a similar expression.

\subsection{Stochastic volatility and the Heston model}\label{stochvol}

In a stochastic volatility model the asset process is given, in a risk neutral-measure, by the SDE 
\begin{equation}\label{Heston}
 dS_t=r S_t dt + \sqrt{v_t}S_t dW^1_t 
\end{equation}
where $v_t$ is some continuous stochastic variance process. By the Dubins and Schwarz's theorem 
 any continuous martingale $M_t$ can be written as  $M_t= W_{\langle M \rangle_t}$ for a certain Brownian motion $W_t$, which implies that the DTC structure of a stochastic volatility model corresponds to a standard Brownian motion $W_t$ time-changed by $T_t$ as in (\ref{T}).
 In order to explicitly express the characteristic function $\Phi_{t_0}(z,w)$  we must make a specific choice for the dynamics in (\ref{Heston}). For instance, we can  make the popular choice of selecting a square-root (CIR) equation for the instantaneous variance:
\begin{equation}\label{HestonDyn}
 d v_t=\alpha(\theta-v_t)dt + \eta \sqrt{v_t}d W^2_t
\end{equation}
for positive constants $\alpha, \theta, \eta$ and a Brownian motion $W^2_t$ linearly correlated with $W^1_t$ through a correlation coefficient $\rho$. For $S_t$ to be well-defined, the parameters $ \alpha, \theta$, and $\eta$ need to satisfy the \emph{Feller condition} $2 \alpha \theta \geq \eta^2$. The system of SDEs (\ref{Heston})-(\ref{HestonDyn}) is the model by Heston (1993). As we change to the measure $\mathbb{Q}(z,w)$, the application of the complex-plane version of Girsanov's theorem and a simple algebraic manipulation reveals that the leverage-neutral dynamics $v_t^z$ of $v_t$ are of the same form as (\ref{HestonDyn}), but with parameters:
\begin{equation}\label{hestonleverage1}
 \alpha^z=\alpha-i \rho z \eta,
\end{equation}
\begin{equation}\label{hestonleverage3}
\theta^z= \alpha \theta/ \alpha^z
\end{equation}
(see also Carr and Wu 2004). Using equation (\ref{generallaplace}), we determine $\Phi_{t_0}$ as follows\footnote{Torricelli (2013) has independently found $\Phi_{t_0}$ for the Heston model by augmenting the SDE system (\ref{Heston})-(\ref{HestonDyn}) with the equation  $ d I_t = v_t dt$, and solved the associated Fourier-transformed parabolic equation via the usual Feynman-Kac argument. As has to be the case, the two approaches coincide.}:
\begin{align}\label{hestonPhi}\Phi_{t_0}(z,w)
  =&\mathcal L^z_{ \Delta T}(z^2/2 + i z/2-iw ),\end{align}
 where  $ \mathcal L^z_{ \Delta T}$ indicates the transform with respect to $v_t^z$ which 
 is well-known
 analytically (e.g. Dufresne, 2001).  The case $T_t=t$ reverts back to the Black-Scholes model,  when (\ref{hestonPhi}) collapses to (\ref{BS}) with $\sigma^2=v_0$.

Other choices for $v_t$ are clearly possible, yielding different stochastic volatility models (the 3/2 model, GARCH, etc.). It is clear from the arguments above that, for an analytical  expression for $\Phi_{t_0}$ to exist it suffices that the Laplace transform of $T_t$ is known in closed form\footnote{See e.g. Lewis 2000, chapter 2, for the Laplace transform of the cited models.} and that $v_t$ belongs to a class of models that are stable under the Girsanov transformation.

\subsection{DTC jump diffusions}\label{DTCJD}

When the underlying L\'evy process is represented by a finite activity jump diffusion, operating a decoupled time change amounts to either introducing a stochastic volatility coefficient in the continuous Brownian part, or making the intensity of the compound Poisson process $X_t^d$ stochastic, or both. Models carrying this structure have been prominently discussed by D.S. Bates (1996) and H. Fang (2000).

\subsubsection{Stochastic volatility with jumps}\label{Bates}

The stochastic volatility model with jumps (SVJ) provides us with a first instance of  a decoupled time change not otherwise obtainable as an ordinary time change. The SVJ model is in fact a L\'evy decoupled time change with a time-changed continuous part and a time-homogeneous jump part. The dynamics for the asset price are given by the exponential jump diffusion: 
\begin{equation}dS_t=r S_{t^-} dt + \sqrt{v_t} S_{t^-} d W^1_t + S_{t^-} (\exp(J)-1) d N_t  - \kappa \lambda S_{t^-} dt;  \end{equation}
for some Brownian motion $W^1_t$, stochastic variance process $v_t$, Poisson process $N_t$ and jump size $J$ having compensator $\kappa$. The underlying DTC structure of the Bates model is given by $X_{T,U}=X^c_{T_t}+X^d_t$ with the characteristic triplet for $X_t$ being $(0, 1, \lambda f_J(x) dx)$ and $T_t$ taking the  form (\ref{T}).
By assuming as a jump distribution a normal random variable, and as a variance process the square-root equation:
\begin{equation}
 d v_t=\alpha(\theta-v_t)dt + \eta \sqrt{v_t}d W^2_t 
\end{equation}
we have the model by Bates (1996). For the discounted asset value to be a martingale, the parameters of the driving stochastic volatility and jump process must be subject to the requirements of both subsection \ref{stochvol} and subsection \ref{merton}.
It is straightforward to see that $\Phi_{t_0}(z,w)$ decomposes into:
\begin{equation}\Phi_{t_0}(z,w)=\Phi_{t_0}^c(z,w)\Phi^d_{t_0}(z,w),  \end{equation}
where $\Phi_{t_0}^c(z,w)$ and $\Phi_{t_0}^d(z,w)$
 are given respectively by (\ref{hestonPhi}) and (\ref{mertonPhi})-(\ref{mertonJumps}). Therefore:
\begin{align}\label{batesPhi}  \Phi_{t_0}(z,w)
= & \mathcal L^z_{\Delta T}(z^2/2 + i z/2-iw ) \exp(-(t-t_0)\lambda( iz \kappa - \phi_{J,J^2}(z,w)  + 1)).  
\end{align}

So far, we have encountered either exponential L\'{e}vy models, or exponentially-affine functions arising as solutions of a PDE problem. Here we have a mixture of the two: a time-homogeneous jump factor, modeled as a compound Poisson process,
 and a continuous diffusion factor, whose characteristic function solves a diffusion problem. The degenerate case $T_t=t$, yields a Merton jump diffusion with diffusion coefficient $\sqrt{v_0}$.

\subsubsection{Stochastic volatility with jumps and a stochastic jump rate}\label{Fang}

Another way of obtaining a DTC model is obtained by introducing a stochastic jump frequency into the jump diffusion of the log-price. A jump process with stochastic volatility  and stochastic jump rate (SVJSJ) has been suggested and empirically studied by  Fang (2000). For a time change $U_t$, we assume $N_t$ to be a pure jump process of finite activity such that conditionally on $U_t$, $N_t$ is distributed like a Poisson random variable of parameter $U_t$, and is  independent of every other involved process. We let $\lambda_t$ be another continuous stochastic process; with the remaining notation as in subsection \ref{Bates}, we define the asset price dynamics as follows:
\begin{equation}dS_{t}=r S_{t^-}dt+ \sqrt{v_t} S_{t^-} d W^1_t +  S_{t^-} (\exp(J)-1) d N_t - \kappa  \lambda_t S_{t^-} dt;  \end{equation}
 This model has a clear DTC L\'{e}vy structure $X_{T,U}$ given by $T_t$, $U_t$ as in (\ref{T}) and (\ref{U}) with $u_t=\lambda_t$, and the characteristic triplet $(0,1,f_J(x)dx)$. The model by Fang is obtained by setting:
\begin{equation}d v_t=\alpha(\theta-v_t)dt + \eta \sqrt{v_t}d W^2_t;
  \end{equation}
\begin{equation} d \lambda_t=\alpha_{\lambda}(\theta_{\lambda}-\lambda_t)dt+ \eta_{\lambda} \sqrt{\lambda_t}d W^3_t.
 \end{equation}
As usual we impose $\langle W^1_t, W^2 \rangle_t= \rho dt$; in contrast, the Brownian motion $W^3_t$ is assumed to be independent of all the other random variables. If both of the diffusion parameter sets obey Feller's condition and the density of $J$ decays sufficiently fast, $\tilde S_t$ is a martingale. Like in the Bates model, the jumps $J$ are normally distributed. The function $\Phi_{t_0}$ is then given by:
\begin{align}\label{fangPhi}\Phi_{t_0}(z,w)
= & \mathcal L^z_{\Delta T}(z^2/2 + iz/2 -iw)\mathcal L_{\Delta U}(iz\kappa - \phi_{J,J^2}(z,w)+1).
\end{align}
Again we recognize  that we can decompose $\Phi_{t_0}(z,w)=\Phi_{t_0}^c(z,w)\Phi^d_{t_0}(z,w)$, where 
 $\Phi_{t_0}^c$ is the leverage-neutral characteristic function of a Heston process of variance $v_t$, and  $\Phi_{t_0}^d$ that of a compound Poisson process time-changed with $U_t$, whose argument was computed in subsection \ref{merton}. The Laplace transforms of the integrated-square root processes arising from $v^z_t$ and $\lambda_t$ are known, and the leverage-neutral version $v_t^z$ of $v_t$ has been given in subsection \ref{stochvol}. Observe that there is no leverage effect in the jump part because of the assumptions on $W^3_t$. Finally, notice that the case $U_t=t$ reduces to the Bates model with a jump activity rate equal to $\lambda_0$.

\subsection{General exponentially-affine activity rate models}\label{affine}

 A general theory of affine models for the discounted asset dynamics has been laid out by   Duffie \emph{et al.} (2000), and Filipovi\'c  (2001), as well as others. We briefly illustrate how this ties in with  decouple time-changed processes. Suppose we have a  Markov process given by the stochastic differential equation: 
\begin{equation}\label{state} d Y_t= \mu(Y_t)dt + \sigma(Y_t) d W_t + d N_t\end{equation} 
 where $W_t$ is an $n$-dimensional Brownian motion, $N_t$ is an $n$-dimensional pure jump process of intensity  $\lambda(Y_t)$ and joint jump size distribution $F(x_1, \ldots, x_n)$ on $\mathbb R^n$. We fix a discount functional $R(x)= r_0 + r_1 \cdot x$, $(r_0, r_1) \in \mathbb R \times \mathbb R^n$ and assume for the coefficients the following linear structure:
\begin{equation}\label{affdiffparameters}
 \begin{array}{rl}  \mu(x)=m_0+m_1  x	, & (m_0, m_1) \in \mathbb R^n \times \mathcal M_n(\mathbb R) \\
  \sigma \cdot \sigma^T (x)=\Sigma_0+ \Sigma_1 x, & (\Sigma_0, \Sigma_1) \in  Sim_n(\mathbb R) \times Sim_n(\mathbb R) ^n   \\
   \lambda(x)=l_0+l_1 x,   &  (l_0, l_1) \in \mathbb R \times \mathbb R^n. \\
 \end{array}
\end{equation}

For some one-dimensional DTC process $X_{T,U}$, let $M_t$ be the change of measure martingale in (\ref{measurechange}) and assume $Y_t$ to be two-dimensional, so that the marginals of $Y_t$ represent the instantaneous activity rates $v_t$ and $u_t$. 

 The leverage-neutral characteristic function $\Phi_{t_0}$  can be recovered as follows. By taking the Ito differential of $\log M_t$ one sees that $M_t$ is itself a linear jump diffusion; we can thus define the three-dimensional augmented process $\tilde Y_t=(Y_t, \log M_t )$ having  some associated extended parameters $\tilde m_0, \tilde m_1,  \tilde \Sigma_0, \tilde \Sigma_1, \tilde l_0, \tilde l_1$ in (\ref{affdiffparameters}). Furthermore,  we can rewrite $M_t$ as:
\begin{equation}M_t=\exp(b \cdot \tilde Y_t) \end{equation}
where $b=(0,0,1)^T$. Now, according to the results of  Duffie \emph{et al.} (2000), appendix C,  under the measure $\mathbb Q=\mathbb Q(z,w)$ having Radon-Nikodym derivative $M_t$, we have:  
 \begin{equation}\label{extendedtransform}\Psi_{t_0}^\mathbb Q(u):=\mathbb E_{t_0} ^ \mathbb Q \left[\exp \left(-\int_{t_0}^t R(\tilde Y_s)ds \right) e^ {u \tilde Y_t}\right]=e^{-\alpha(t_0)  -\beta(t_0) \tilde Y_{t_0}}
\end{equation}
 for all $u$ for which  (\ref{extendedtransform}) is defined, and $\alpha(\tau)$, $\beta(\tau)$ following the Riccati system of ODEs\footnote{$(\beta(\tau)^T   \Sigma_1 \beta(\tau))^k:=\beta(\tau)^T   \Sigma_1^k \beta(\tau), \; k=1, \ldots ,n.$}:
  \begin{align}\label{odeQ1} \beta(\tau)' &= r_1 +(\tilde m_1^T+ \tilde \Sigma_1 b) \beta(\tau)  -   \frac{1}{2} \beta(\tau)^T   \tilde \Sigma_1 \beta(\tau)  - \tilde l_1 ( \mathcal L_F( \beta(\tau) +b )  - \mathcal L_F( b)  ) \\ \label{odeQ2} \alpha (\tau)' & = r_0  +(\tilde m_0+ \tilde \Sigma_0 b) \beta(\tau) -  \frac{1}{2} \beta(\tau)^T \tilde \Sigma_0 \beta(\tau)   - \tilde l_0 ( \mathcal L_F( \beta(\tau)+ b)  - \mathcal L_F( b)  )
\end{align}
for  $\tau \leq t $, with boundary conditions $\beta(t)=u$ and $\alpha(t)=0$. By choosing \begin{equation}\begin{array}{lr} r_0=0, &  r_1=(\zeta(z,w, \mu, \sigma, \theta_0), \xi(z,w, \nu(dx), \theta_0),0 ),\end{array}\end{equation}
one notices that: 
\begin{equation} \label{psiaffine}\Psi_{t_0}^\mathbb Q(0)= \Phi_{t_0}(z,w).
\end{equation}

The solvability of equations (\ref{odeQ1})-(\ref{odeQ2}) is discussed and characterized in Grasselli and Tebaldi (2007). What we have just shown is that the class of the exponentially-affine processes and that of the DTC L\'evy processes intersect in the class of the DTC processes whose instantaneous activity rates are given by affine jump diffusions of the form (\ref{state})-(\ref{affdiffparameters}).

We remark that $\tilde Y_t$ implicitly defines a price process $S_t$ through the instantaneous activity rates and the change of measure martingale $M_t$ accounting for the dependence structure between the time changes and the underlying L\'evy process. The augmented diffusion $\tilde Y_t$  is an \emph{exponentially-affine decoupled time change}; all the models reviewed so far fall under this category\footnote{A model that can be written in DTC form falling outside this intersection is the linear quadratic-affine model by Santa Clara and Yan, \cite{SantaClaraYan}. However, such model does not possess an analytically tractable transform to be used for pricing purposes.}.  Another example of a model that can be represented in this form is the ``double jump model'' of Duffie \emph{et al.} (2000), given by a jump diffusion with stationary jump intensity, whose stochastic volatility is itself a jump diffusion process having the same intensity as the stock.

\section{A multifactor DTC jump diffusion}\label{wish}

In this section we illustrate a theoretical model in the DTC framework admitting a closed formula for $\Phi_{t_0}$. The price evolution we consider has several attractive features: it is a DTC jump diffusion and therefore allows for the presence of a stochastic jump rate and a stochastic volatility; in addition, both of these processes are given through a multifactor specification. Regarding the correlation modeling, the dynamics we assume carry the usual linear correlation between the stochastic volatility and the Brownian motion driving the stock, as well as a dependence structure between the instantaneous rates of activity. 
Thus, the hypothesis of a market jump and continuous activity which are correlated with each other finds room in this model. 

 The case for multifactor volatility has been made by a number of authors. As pointed out by Bergomi 2005, a  volatility specification of this kind overcomes the inability of single factor models to fit the current market skew, while at the same time predicting the future evolution of implied volatilities consistently with the historical data, which is of particular relevance in the pricing of certain forward-starting derivatives such as the 
  cliquet option. Furthermore multifactor models make possible a long-term volatility specification that accounts for the slow decay of the autocorrelation function of the variance process (Gallant \emph{et al.} 2013), as opposed to single factor models, whose autocorrelation function typical decay is exponential.  

The price process we analyze links to a modern and currently very active strain of research, which makes use of the so-called \emph{Wishart process} for financial modeling purposes. The Wishart process is a matrix-valued affine process, studied foremostly by M.F. Bru (1991), that can be thought as a multivariate extension of the CIR process. It has been used to model the driving factors of term structures and price processes by, among the others, da Fonseca \emph{et al.} (2007, 2008), and  Gouri\'eroux and Sufana   (2003, 2010), among the others. 

For commuting matrices $Q$ and $M$ in  $\mathcal M_n(\mathbb R) $, with $Q$ invertible and $M$  negative definite (to capture mean-reversion), a Wishart process $\Sigma_t$ is defined by the following multi-dimensional SDE: 

\begin{equation} \label{Wishart} d \Sigma_t= \sqrt{\Sigma_t}  d B_t Q + Q^T d B_t^T \sqrt{\Sigma_t} + (M \Sigma_t +   \Sigma_t M^T + c Q^T Q)dt. \end{equation}

The Wishart process is thus a symmetric matrix-valued process. The matrix $M$ must satisfy the further constraint $c  \geq n-1$ for some $c>0$; $B_t$ is here an $n \times n$ \emph{matrix} of Brownian motions.

We can use $\Sigma_t$ to build a one-dimensional DTC jump diffusion model as follows. We choose $n=2$ and let $W_t$ be a two-dimensional Brownian motion such that $\langle  W^1, B^{1,1} \rangle_t=\langle W^2, B^{2,1} \rangle_t = \rho t$ for some correlation parameter $\rho$ and $W_t$ is independent of all the other entries of $B_t$.  Let  $N_t$ be a finite activity jump process  like in subsection \ref{Fang}, which we further assume it to be independent of both $W_t$ and $B_t$. As usual, the jump distribution $J$ is set to be independent of every other variable. Denoting by $\sigma_t$ the positive-definite matrix square root of $\Sigma_t$, we can define the risk-neutral dynamics of the log-price process $Y_t=\log(S_t/S_0)$ as:
\begin{equation}\label{WishPrice} \begin{array}{lr}   d Y_t=
(r - \Sigma^{1,1}_t/2 - \Sigma_t^{2,2} \kappa  )dt +  \sigma^{1,1}_t d  W^1_t   + \sigma^{1,2}_t d W^2_t + J d N_{t}, \ & Y_0=0 \end{array}
\end{equation}
where $\kappa$ equals $\phi_J(-i)-1$. The process $S_t$ can be seen to be a local martingale of the form (\ref{AssetDynamics}) by assuming the time changes in proposition \ref{prop1}  to be like those in equations (\ref{T}) and (\ref{U}) and letting:
\begin{equation}\label{DTCwish}\begin{array}{ccccc} 
  \displaystyle{d  X^c_{t} =\frac{\sigma_t^{1,1}}{\sqrt{\Sigma^{1,1}_t}}d W_t^1 +\frac{\sigma_t^{1,2}}{\sqrt{\Sigma^{1,1}_t}}d W_t^2} ,  &
   X^d_t  = \sum ^{M_t}_{i=0}  J, &
   v_{t}  =  \Sigma_t^{1,1}, &
   u_t   =  \Sigma_t^{2,2}, &
    \theta_0 =  (-i, -i). 
   \end{array}
\end{equation}
where $M_t$ is a Poisson process of intensity 1. Multifactoriality is reflected in the fact that 
 even though each activity rate is specified by a single factor, the correlation  between them involves all the three components of $\Sigma_t$. Indeed, let $w^j_t$ be the scalar Brownian motion driving $\Sigma_t^{j,j}$: it can be proved that
 \begin{equation}\label{activitycorr}
 d \langle w^1, w^2 \rangle_t= \frac{\Sigma_t^{1,2}(Q^{1,1} Q^{1,2} + Q^{2,1} Q^{2,2})}{\sqrt{\Sigma_t^{1,1} ((Q^{1,1})^2+ (Q^{2,1})^2) } \sqrt{\Sigma_t^{2,2}((Q^{1,2})^2+ (Q^{2,2})^2) } }dt.
 \end{equation}
 Observe that this correlation is \emph{stochastic}. The correlation between $Y_t$ and its instantaneous variance  $\Sigma^{1,1}_t$ is instead determined by the interplay between $\rho$ and $Q$; we have: 
\begin{align}\label{wishcorr} d \langle w^1, X^c \rangle_t = \frac{\rho Q^{1,1}}{\sqrt{(Q^{1,1})^2 + (Q^{2,1})^2}}.
\end{align}
 By applying the Girsanov's transformation, we see that the $\mathbb Q(z,w)$-dynamics of (\ref{Wishart}) are given by the complex-valued Wishart process:
\begin{equation} \label{Wishartz} d \Sigma^z_t= \sqrt{\Sigma^z_t} d B_t Q + Q^T d B_t^T \sqrt{\Sigma^z_t} + (M^z  \Sigma^z_t +  \Sigma^z_t (M^z)^T + cQQ^T)dt 
\end{equation}
where
\begin{equation} \begin{array}{lr}M^z=M + i z Q^T R , & R=  \left( \begin{matrix} \rho & 0 \\ 0 & 0
\end{matrix} \right), 
\end{array}
\end{equation}
whence:
\begin{align}\label{WishartPhi} \Phi_{t_0}(z,w) 
= &    \mathcal L^z_{\Delta T, \Delta U} ( z^2/2 + iz/2-iw , iz\kappa - \phi_{J,J^2}(z,w)+1).
\end{align}

Notably, the Laplace transform  $\mathcal L_{\Delta T,  \Delta U}(\cdot)$ for $v_t$ and $u_t$ as in (\ref{DTCwish}) can be derived in closed form (see the appendix), since it is a particular case of some well-studied transforms of the Wishart process. In the spirit of the previous section, we remark that when all the Wishart matrices are diagonal, an application of the L\'evy theorem shows that this model reduces to a SVJSJ for some appropriate parameter choice (see section \ref{numericalwish} below). 

 It is therefore possible to price, and find price sensitivities of, joint price/volatility contingent claims on an asset whose log-price process follows $Y_t$. The model just presented is a particular DTC jump diffusion featuring not only the usual leverage effect between the underlying jump diffusion and the continuous/jump market activity, given by (\ref{wishcorr}), but also a correlation structure between the rates of activities themselves, as shown by equation (\ref{activitycorr}). For practical financial modeling purposes, we shall see in section \ref{numericalwish} below that this relationship positively impacts the ability of a jump model of capturing the volatility skew of a traded asset.

 This asset pricing model provides an example of how non-trivial DTC modeling (i.e. achieved by using \emph{dependent} time changes) might work in practice. As a general approach, one could start from a multivariate stochastic process whose integrated marginals have a known joint Laplace transform, and use these as time changes for the continuous and discontinuous parts of some given L\'evy process. The underlying L\'evy triplet will only appear as an argument of such a transform, and the characteristic function of the process is then completely determined up to a measure change.  This and similar models are currently the subject of further research.


\section{Numerical testing}\label{numerical}

\subsection{Implementation of the pricing formula}

For validation purposes, we numerically implemented equation (\ref{Lewis}) in MATHEMATICA\textsuperscript{\textregistered} for various models and payoffs, and compared the analytical prices so obtained to a MATLAB\textsuperscript{\textregistered} simulation following an Euler scheme. The results confirm the consistence of the pricing formula with the risk-neutral valuation theory.

We analyzed three different contingent claims: one on $S_t$, one on $TV_t$, and one joint derivative on $S_t$ and $ TV_t$. Namely, we accounted for three different kinds of options: a vanilla call option, a call option on the realized volatility, and a call TVO.

For a plain call option of maturity $t$ and strike $K$, the function $F$ and its Fourier transform $\hat F$ to be used in (\ref{Lewis}) are:
\begin{equation}\begin{array}{lr} \displaystyle{ F(z)=(e^z-K)^+}, & \displaystyle{\hat F(z)= \frac{ K^{1+i z}}{(i z- z^2)}};
\end{array}\end{equation}  
the function $\hat F$ exists and is analytic for  Im$(z) >1$.

A possible volatility investment is to write a call option using as an underlying the total realized volatility $\sqrt{TV_t}$ of an asset, or to buy a call option directly on a volatility index such as the VIX. Hence, we would like to price the contingent claim paying $(\sqrt{TV_t}-Q)^+$ at time $t$ for some strike realized volatility level $Q$. In our equation we would then need to take: 
\begin{equation}\begin{array}{lr} \displaystyle{F(w)=(\sqrt{w}-Q)^+,} & \displaystyle{\hat F(w)=\frac{\sqrt{\pi} \mbox{ Erfc}(Q \sqrt{-i  w}) }{2 (-i w)^{3/2}};}  \end{array}
\end{equation}  
the Fourier transform here is well-defined and holomorphic in Im$(w)>0$.  
  
 The target volatility option mentioned in the introduction is a natural candidate for testing mixed-claim structures, being an instance of a currently traded joint asset/volatility derivative. The payoff function $F$ and the Fourier transform for a call TVO of strike $H$, maturing at $t$  with target volatility $\target$ are:   
 \begin{equation}\begin{array}{rl} \displaystyle{ F(z, w)=\target \sqrt{\frac{t}{w}} (e^z-H)^+,} & \displaystyle{\hat F(z,w) =  \target(1+i) \sqrt{ \frac{\pi t}{2 w}}  \frac{ H^{1+i z}}{(i z- z^2)}}. \end{array}
\end{equation}
Observe that, unlike the previous contracts, the payoff $F$ of a TVO shows explicit dependence on the expiry $t$. The domain of holomorphy of $\hat F$ is the strip $\Sigma_F=\{ (z,w) \in \mathbb C^2, \mbox{ Im}(z)>1, \mbox{ Im}(w)>0 \}$.

\medskip

We numerically tested these derivatives using five different stochastic models for the underlying asset processes: namely, the Black-Scholes, Heston, Merton, Bates and Fang models. All the prices have been produced with a single implementation of (\ref{Lewis}) with $\Phi_{t_0}$ given by (\ref{fangPhi}). All we had to do is changing/voiding the relevant parameters,  and replacing the module for $\hat F$ whenever we switched payoff. The parameter estimates  have been taken from Fang's fitting of the S\&P 500 index, and are illustrated in table 1. Tables 2 to 6 summarize the result obtained for five different sets of observable market conditions $(r, t_0, S_{t_0},  TV_{t_0})$ and contract parameters $t, K,Q,H, \target$. For each given $t_0$, the maturity $t>t_0$ is  the same for all the three options considered; a TVO is always compared to a vanilla call having same strike, and the target volatility is set to be constant across all the data sets.   

We simulated 100.000 paths of step size $(t-t_0)/1000$. The figures show a good overall match between the analytical value (AV) and the Monte Carlo value (MC); the relative error $|$AV$-$MC$|$/MC is shown in parentheses.  For the call option on the volatility in some cases we almost attain four-digit precision. On the other hand, for some models and data sets the integrands for the TVO valuation remain highly oscillatory around the maximum integration range; when this occurs, a certain loss of accuracy is observed.  

\subsection{Leverage sensitivity of the model of section \ref{wish}}\label{numericalwish}

In this section we try to provide further financial motivation for the model with correlated stochastic volatility and jump rate (CSVJA) of section \ref{wish} through a numerical exercise. Specifically, in what follows we analyze the sensitivity of the volatility skew of the Bates, Fang and CSVJA models with respect to variations in the ``leverage parameter'' $\rho$.  We define in this section $\rho$ as being the constant value giving the instantaneous correlation between the Brownian components of the stochastic variance and that of the log-returns of the accounted models. 

 Preliminarily, let us assume the following form for the matrices in the equations of section 7:
\begin{equation}\label{matrices}
\begin{array}{cccc}
M= \left( \begin{matrix}
 M_1 & 0 \\ 0  & M_2  \
\end{matrix} \right),  & \Sigma_0= \left( \begin{matrix}
\Sigma_1 & 0 \\ 0 & \Sigma_2 
\end{matrix} \right), & R= \left( \begin{matrix} R_1 & 0 \\ 0 & 0
\end{matrix} \right),
Q= \left( \begin{matrix}  Q_1 & Q_0 \\ Q_0 & Q_2
\end{matrix} \right).
\end{array} 
\end{equation}
The jumps are assumed to be normally distributed with mean $\mu$ and variance $\delta^2$. As already mentioned in section 6, an application of the L\'evy theorem shows that when $Q_0=0$ the parametrization above  is equivalent to a Fang model with:
\begin{equation}
\begin{array}{cccccc}
\alpha=-2M_1, & \displaystyle{\theta=-\frac{b Q_{11}^2}{2 M_1}}, & \eta=2 Q_{11}, & v_0=\Sigma_1; \\   \\ \alpha_\lambda=-2M_2, & \displaystyle{\theta_\lambda=-\frac{b Q_{22}^2}{2 M_2}}, & \eta_\lambda=2 Q_{22}, & \lambda_0=\Sigma_2, & \rho=R_1.
\end{array}
\end{equation}
Clearly, by further assuming $M_2=Q_2=0$ one recovers the Bates model. The example in (\ref{matrices}) is then ``minimal'' in the sense that it represents the slightest possible modification of certain known models attaining a true CSVJA specification. 
  We set the parameters in  (\ref{matrices}) as follows:
\begin{equation}\label{parametersmatrix}
\begin{array}{ccccccc}
M_1=M_2=-0.33, &  Q_1=Q_2=0.25, & Q_0=0.15, & \Sigma_1=\Sigma_2=0.01, & \kappa=-0.2, & \delta=0.3, & c=3. 
\end{array}
\end{equation}
 The market parameters have been chosen as $S_{t_0}=100$ and $r=0.03$. According to the remarks above, by voiding the relevant matrix entries, we can use (\ref{parametersmatrix}) to specify one instantiation each of the Bates, Fang and CSVJA models.   

In figures \ref{fig1} and \ref{fig2} we show the 3-month volatility skew extracted from the call option prices generated by (\ref{LewisEq}), respectively for the Fang and Bates model. The two curves correspond to a value of $\rho$ given respectively by $0$ and $-0.6$. As we can see, the skew is only marginally affected by the variation of the leverage parameter.  We emphasize that, consistently with the standard representations given in section 6, a direct check confirms that we have here no correlation between the activity rates, that is, if $Q_0=0$ then (\ref{activitycorr}) vanishes.

The lack of sensitivity of the skew with respect to $\rho$ in jump models is a well understood fact. The reason is that for such a class of models the short term dynamics of the surface are generally handled by the jump parameters which unlike the leverage value, do not retain a clear economic interpretation. This is normally regarded as a shortcoming of the jump models, since an improvement in the short term smile fit is achieved at a cost of a loss of sensitivity with respect to the calibrated parameters, which can be problematic for e.g. skew hedging purposes. For a full discussion, see da Fonseca and Grasselli (2011). 

Interestingly, if we perform the same analysis for the full CSVJA parametrization (figure \ref{fig3}) 
 we instead notice a considerable variation of the volatility skew when leverage is introduced. In particular, we see that when $\rho$, which in this case is given by equation (\ref{wishcorr}), is nonzero, the short term smile is much more negatively skewed than in the uncorrelated cases. Since we are exactly in the same situation as in the Fang model in terms of marginal distributions of the driving factors and values of $\rho$, such an increased sensitivity of the skew  can only be due to the  correlation between the activity rates established when letting $Q_0 \neq 0$. The effect of this ``second'' correlation on the skew can be intuitively explained as follows. Because of the leverage effect, as prices go down the volatility spikes up, generating negative skewness in the returns distribution. But now the volatility is correlated with the jump activity; in particular, we see from (\ref{activitycorr}) that this is most likely going to be a \emph{positive} correlation. The reason for this is that the sign of (\ref{activitycorr}) depends only on the sign of $\Sigma^{1,2}_t$, which is a mean reverting process with a positive mean reversion level. Therefore, when the volatility increases, the jump intensity is likely to increases as well, and hence so does the probability of observing a (negative on average) jump. The latter contributes to the negative skewness of the asset returns and thus reinforces the negative skew of the volatility smile. 

The test conducted reveals a very useful property of a DTC L\'evy jump model of CSVJA type. As opposed to the standard jump models with independent activities, the underlying dynamics introduced are able to adequately match the convexity of a steep volatility skew without having to surrender the overall control on the surface of the correlation parameter $\rho$.  Unlike the traditional jump models, changes in the short term part of the surface can be achieved not only by a change in the distribution of the jump part, but also by varying a correlation parameter, much like what happens in a purely diffusive model. The ability of the surface to promptly respond to variations of financially meaningful variables is a property greatly appreciated by the practitioners; in this respect the CSVJA model may improve the jump asset modeling literature. An empirical study on this model is matter of ongoing research.

\section{Conclusions}

In this paper we have suggested a theoretical pricing framework that can easily be made to represent popular settings, but whose full model and payoff generalities were not possible by using the previous theory. We achieved this by introducing the concept of decoupled time change and by considering payoffs on an asset and its accrued volatility as the default target claims to be priced.

 DTC processes provide  a common time-changed representation for many models from the extant literature, and help to capture possible dependence relationships between the continuous and the jump market activities. We obtained martingale relations for stochastic exponentials of DTC L\'{e}vy processes, based on which we defined an asset price's dynamics. We then linked the joint characteristic function of the log-price dynamics and the quadratic variation to the joint Laplace transform of the time changes. As a by-product, we extended the measure change technique of Carr and Wu (2004) to the class of DTC L\'evy processes. In the DTC setup, we rigorously posed and solved the valuation problem of a derivative paying off on an asset  $S_t$ and its realized volatility,  
 by  means of an inverse-Fourier integral relation that extends previously known formulae. 

Several stochastic models and contingent claims have been analyzed. In all the accounted cases we outlined the underlying DTC structure and found the leverage-neutral characteristic function. In particular, the SVJ and SVJSJ models were shown to have their own time-changed L\'{e}vy structure. Furthermore, we have introduced a novel DTC L\'evy theoretical model which illustrates how equity modeling could benefit from the idea of decoupled time changes.   

For numerical comparison and validation purposes, we focused on specific instances from the three payoff classes allowed by our equation: plain vanilla claims, volatility claims, and joint asset/volatility claims. The results confirm the validity of our method.  From a computational standpoint, a single software implementation can output prices for several different combinations of models and payoffs. 
Finally, we have presented some initial evidence that a model with correlation between the activity rates potentially allows a better management of the volatility skew compared to other jump models.
 


\section*{Appendix: proofs}

We begin by recalling some basic definitions from the semimartingale representation theory; in particular, we refer to Jacod and Shiryaev (1987), chapters 2 and 3, and Jacod (1979), chapitre X. 

\bigskip

 We define the \emph{Dol\'{e}ans-Dade exponential} of an $n$-dimensional semimartingale $X_t$ starting at 0 as:
\begin{equation} \mathcal E(X_t)=e^{X_t- \langle  X^{c}_t \rangle/2} \prod_{s \leq t}(1+ \Delta X_s)e^{- \Delta X_s}
\end{equation}
where $X^{c}_t$ denotes the continuous part of $X_t$ and the infinite product converges uniformly. This is known to be the solution of the SDE $d Y_t=Y_{t^-}dX_t$, $Y_0=1$. 

\bigskip

Let $\epsilon(x)$ be a truncation function and $(\alpha_t, \beta_t, \rho(dt \times dx))$ be a triplet of predictable processes that are well-behaved in the sense of  Jacod and Shiryaev (1987), chapter 2, equations (2.12)-(2.14). For $\theta \in \mathbb C^n$, associate with  $(\alpha_t, \beta_t, \rho(dt \times dx))$ the following complex-valued functional:
\begin{equation}\label{cumulant} \Psi_t(\theta)= i \theta^T \alpha_t - \theta^T \beta_t  \theta/2+ \int_0^t\int_{\mathbb R^n} (e^{i \theta^T x}-1-i \theta^T x \epsilon(x) )\rho(ds \times dx). 
\end{equation}
This functional is well-defined on:
\begin{equation}\label{domain} \mathcal D= \left \{\theta \in \mathbb C^n \mbox{ such that } \int_0^t \int_{\mathbb R^n}e^{i \theta^T x} \epsilon(x) \rho (ds \times dx) < + \infty   \mbox{ almost surely} \right \} \end{equation}
and because of the assumptions made it is also predictable and of finite variation.

\bigskip

Let $X_t$ be an $n$-dimensional semimartingale.   The \emph{local characteristics} of $X_t$  are the unique predictable processes $(\alpha_t, \beta_t, \nu(dt \times dx))$ as above, such that $\mathcal E(\Psi_t(\theta)) \neq 0$ and  $\exp(i \theta^T X_t)/\mathcal E(\Psi_t(\theta))$ 
is a local martingale for all $\theta \in \mathcal D$. The process $\Psi^X_t(\theta)$ in (\ref{cumulant}) arising from the local characteristics of $X_t$ is called the \emph{cumulant process} of $X_t$, and it is independent of the choice of $\epsilon(x)$. It is clear that the local characteristics of a L\'evy process $X_t$ of L\'evy triplet $(\mu, \Sigma, \nu)$ are $(\mu t, \Sigma t, \nu dt  )$.

\bigskip

If $\mathcal B$ is a Borel space, the time change of a random measure  $\rho( d t \times dx)$ on the product measure space $\Omega \times \mathcal B(\mathbb R_+ \times \mathbb R^n)$ according to some time change $T_t$, is the random measure:
\begin{equation}
\rho( d T_t \times dx)(\omega, [0,t) \times B)=\rho( dt \times dx) (\omega, [0, T_t(\omega)) \times B)
\end{equation}
for $\omega \in \Omega$, $t \geq 0$ and all sets $B \in \mathcal B(\mathbb R^n)$. A random measure $\rho(dt \times dx)$ is $T_t$-\emph{adapted}  if for all $t, \omega$ and $B$ holds $\rho(dt \times dx)((T_{t^-}, T_t], \omega, B)=0$. This is equivalent to say that for each measurable random function $W$, the integral of $W$ with respect to $\rho$ is $T_t$-continuous (see Jacod 1979, chapitre X); conversely, if $X_t$ is a pure jump process that is $T_t$-continuous, then its associated jump measure $\rho(dt \times dx)$ is $T_t$-adapted (Kallsen and Shiryaev 2002, proof of lemma 2.7).

\bigskip

A semimartingale $X_t$ is said to be \emph{quasi-left-continuous} if its local characteristic $\nu$ is such that $\nu( dt \times dx )(\omega,  \{t \} \times  B)=0$ for all $t \geq 0$, Borel sets $B$ in $\mathbb R^n$, and $\omega \in \Omega$. Essentially, quasi-left-continuity means that the discontinuities of the process cannot occur at fixed times.

\bigskip

The following theorem clarifies the importance of continuity/adaptedness under time changing, i.e. that stochastic integration and integration with respect to a random measure ``commute'' with the time changing operation.

\begin{ThmA}\label{thma}  Let $T_t$ be a time change with respect to some filtration $\mathcal F_t$.
\begin{itemize}
\item[(i)]
Let $X_t$ be a $T_t$-continuous semimartingale. For all $\mathcal F_t$-predictable integrands $H_t$, we have that $H_{T_t}$ is $\mathcal F_{T_t}$-predictable, and:
\begin{equation} \int_{0}^{T_t}H_{s}dX_s=\int_{0}^{t} H_{T_{s^-}}dX_{T_s};\end{equation}
 \item[(ii)] Let $\rho(dt \times dx)$ be a $T_t$-adapted random measure on  
 $\Omega \times \mathcal B(\mathbb R_+ \times \mathbb R^n)$. For all measurable random functions $W(t, \omega, x)$ and $\omega \in \Omega$ it is:
\begin{equation} \int_{0}^{T_t} \int_{\mathbb R^n}W(s, \omega,x) \rho(ds \times dx)(\omega)=\int_{0}^{t} \int_{\mathbb R^n} W(T_{s^-}(\omega), \omega, x) \rho(dT_{s} \times dx)(\omega).\end{equation}
\end{itemize}
\end{ThmA}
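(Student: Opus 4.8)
The plan is to prove both statements by the standard two-step route for identities involving stochastic integrals: first verify them on a generating class of elementary integrands by direct computation, then extend to all predictable integrands (resp.\ measurable random functions) by a functional monotone class argument combined with the stability of stochastic integration under dominated limits. Throughout I would write $\mathcal G_s:=\mathcal F_{T_s}$ for the time-changed filtration and $C_a:=\inf\{s\geq 0:T_s>a\}$ for the right-continuous inverse of $T$; by the time-change theorem for semimartingales (Jacod \cite{Jacod}) the process $Y_s:=X_{T_s}$ is a $\mathcal G_s$-semimartingale, so the right-hand integrals are well defined.

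First I would settle the measurability assertion, which is logically prior to the identity since it guarantees the right-hand integrand is admissible. The process $s\mapsto T_{s^-}$ is left-continuous, increasing and $\mathcal G_s$-adapted, hence $\mathcal G$-predictable; consequently the map $\Phi:(s,\omega)\mapsto(T_{s^-}(\omega),\omega)$ is measurable from the $\mathcal G$-predictable $\sigma$-field into the $\mathcal F$-predictable $\sigma$-field. Since $H$ is $\mathcal F$-predictable, the composition $H\circ\Phi=H_{T_{s^-}}$ is $\mathcal G$-predictable, which is exactly what is needed for the integral $\int_0^{\cdot}H_{T_{s^-}}dY_s$ to be meaningful; the claim that $H_{T_t}$ is $\mathcal F_{T_t}$-predictable is the same fact read off the left-limit process.

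For the identity in (i) I would take the elementary integrand $H_s=\xi\,\I_{(\sigma,\tau]}(s)$ with $\sigma\leq\tau$ two $\mathcal F$-stopping times and $\xi$ bounded and $\mathcal F_\sigma$-measurable; such processes generate the $\mathcal F$-predictable $\sigma$-field. The left-hand side evaluates immediately to $\xi\,(X_{\tau\wedge T_t}-X_{\sigma\wedge T_t})$. On the right-hand side one checks that $\{s:T_{s^-}\in(\sigma,\tau]\}=(C_\sigma,C_\tau]$ up to the relevant conventions on the inverse, so that $\int_0^t H_{T_{s^-}}dY_s=\xi\,(Y_{C_\tau\wedge t}-Y_{C_\sigma\wedge t})=\xi\,(X_{T_{C_\tau\wedge t}}-X_{T_{C_\sigma\wedge t}})$. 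Here the $T_t$-continuity of $X$ is decisive: it forces $X$ to be constant on every interval $[T_{s^-},T_s]$, so that $X_{T_{C_a\wedge t}}$ agrees with $X_{a\wedge T_t}$ even though $C_a$ may fail to invert $T$ exactly at the jumps of the clock. Matching the two sides then reduces to this constancy, and the identity for general $H$ follows by a monotone class argument together with passage to the limit in probability of the approximating stochastic integrals.

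Part (ii) I would prove along the same lines, replacing elementary predictable integrands by the generating functions $W(s,\omega,x)=\xi\,\I_{(\sigma,\tau]}(s)\,\I_B(x)$ and the stochastic integral by integration against $\rho$; the definition of the time-changed measure $\rho(dT_{s^-}\times dx)$ recalled in the appendix makes the substitution $s\mapsto T_{s^-}$ precisely the one dictated by $\Phi$, and the monotone class theorem extends the identity to all measurable $W$. The main obstacle in both parts is the careful bookkeeping of left-limits at the jump times of $T_t$: one must verify that the left-limit $T_{s^-}$ in the integrand, the endpoints $C_\sigma,C_\tau$ obtained from the inverse clock, and the flat stretches $[T_{s^-},T_s]$ of $X$ conspire so that no increment of $X$ (resp.\ no mass of $\rho$) is double-counted or lost. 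This is exactly the point at which the $T_t$-continuity hypothesis is indispensable, and it is the reason the analogous statement fails without it.
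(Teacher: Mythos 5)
You should first be aware that the paper itself contains no argument for Theorem A: its entire proof is the citation ``Jacod \cite{Jacod}, th\'eor\`eme 10.19 (a) for part \emph{(i)}, th\'eor\`eme 10.27 (a) for part \emph{(ii)}'', so what you have written is not an alternative to the paper's reasoning but a reconstruction of the result it outsources --- and as a reconstruction it is correct in outline and isolates the right mechanism. Your key computations check out: with $C_a=\inf\{s: T_s>a\}$ one has exactly $\{s:\sigma<T_{s^-}\le\tau\}=(C_\sigma,C_\tau]$, and the $T_t$-continuity of $X$ (constancy on $[T_{C_a^-},T_{C_a}]$, an interval that contains $a$) is precisely what yields $X_{T_{C_a\wedge t}}=X_{a\wedge T_t}$ when $C_a\le t$, the case $C_a>t$ being automatic since then $T_t\le a$; this is indeed where the hypothesis enters and why the identity fails without it. The trade-off between your route and the paper's is this: the citation buys brevity and rigor by reference, while your self-contained argument makes the mechanism visible but leaves several points that must actually be verified for it to be a proof, namely (a) that $C_\sigma$ and $C_\tau$ are stopping times for $\mathcal G_s=\mathcal F_{T_s}$ and that $\xi$ is $\mathcal G_{C_\sigma}$-measurable (both true, using $T_{C_\sigma}\ge\sigma$ and standard comparisons of stopping times, but needed for your right-hand side to be a legitimate elementary stochastic integral rather than a formal expression), (b) the limit passage in the monotone-class step, which requires the stochastic dominated convergence theorem plus localization to reach unbounded predictable $H$, and (c) in part \emph{(ii)}, where no continuity of $X$ is assumed at all, the ``no mass lost or double-counted in the gaps'' issue you correctly raise is governed not by $T_t$-continuity but by the $T_t$-adaptedness hypothesis on $\rho$ (the measure-theoretic analogue: $\rho$ puts no mass on the gaps $(T_{s^-},T_s)$), so your closing sentence attributes that bookkeeping to the wrong hypothesis. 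None of these is a conceptual gap; with them filled in, yours is essentially the standard proof underlying Jacod's theorem, and it has the pedagogical merit of showing exactly where the decoupled-time-change arguments later in the paper lean on $T_t$-continuity.
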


\begin{proof}
See Jacod (1979),  th\'{e}or\`{e}me 10.19, (a), for part \emph{(i)}, and th\'{e}or\`{e}me 10.27, (a), for part \emph{(ii)}.
\qed
\end{proof}

In particular, from part \emph{(ii)} of theorem A follows that if $X_t$ is a pure jump proces with associated jump measure $\rho(dt \times dx)$ adapted to some time change $T_t$, then the time-changed process $X_{T_t}$ has associated jump measure $\rho(dT_t \times dx)$. 

\bigskip
 
It is essentially a consequence of theorem A that under the assumption of  continuity with respect to $T_t$, the local characteristics of a time-changed semimartingale are well-behaved, in the sense of the next theorem.

\begin{ThmB}\label{thmb}
Let $X_t$ be a semimartingale having local characteristics $(\alpha_t, \beta_t, \rho(dx \times dt))$ and cumulant process $\Psi^X_t(\theta)$ with domain $\mathcal D$, and let $T_t$ be a time change such that $X_t$ is $T_t$-continuous. Then the time-changed semimartingale $Y_t=X_{T_t}$ has local characteristics $(\alpha_{T_t}, \beta_{T_t}, \rho( d T_t \times dx))$ and  the cumulant process $\Psi_t^Y(\theta)$ equals $\Psi^X_{T_t}(\theta)$, for all $\theta \in \mathcal D$. 
\end{ThmB}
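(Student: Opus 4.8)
The plan is to verify directly that the candidate triplet $(\alpha_{T_t}, \beta_{T_t}, \rho(dT_t \times dx))$ meets the two defining requirements of local characteristics for $Y_t = X_{T_t}$ — predictability, and the local-martingale property of $\exp(i\theta^T Y_t)/\mathcal E(\Psi^Y_t(\theta))$ — and then to invoke the uniqueness clause built into the definition of local characteristics. First I would settle predictability: the coefficients $\alpha_{T_t}$ and $\beta_{T_t}$ are $\mathcal F_{T_t}$-predictable by Theorem A(i) applied to the predictable integrands $\alpha$ and $\beta$, while $\rho(dT_t \times dx)$ is $\mathcal F_{T_t}$-adapted directly from the definition of the time-changed random measure given above.

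The second step is to compute the cumulant process $\Psi^Y_t(\theta)$ generated by the candidate triplet through (\ref{cumulant}) and to identify it with $\Psi^X_{T_t}(\theta)$. The drift and Gaussian terms match term-by-term after the substitution $t \mapsto T_t$. For the integral against the random measure I would apply Theorem A(ii) with the time-\emph{independent} random function $W(s, \omega, x) = e^{i\theta^T x} - 1 - i\theta^T x\, \epsilon(x)$; since $W$ does not depend on $s$, the identity in Theorem A(ii) collapses to $\int_0^t \int_{\mathbb R^n} W \, \rho(dT_{s^-}\times dx) = \int_0^{T_t} \int_{\mathbb R^n} W \, \rho(ds \times dx)$, which is precisely the integral term of $\Psi^X_{T_t}(\theta)$. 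Hence $\Psi^Y_t(\theta) = \Psi^X_{T_t}(\theta)$ for all $\theta \in \mathcal D$, and in particular $\mathcal E(\Psi^Y_t(\theta)) = \mathcal E(\Psi^X_{T_t}(\theta))$.

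The third step, which I expect to be the crux, is the local-martingale property: that $\exp(i\theta^T X_{T_t})/\mathcal E(\Psi^X_{T_t}(\theta))$ is an $\mathcal F_{T_t}$-local martingale. Writing $Z_t := \exp(i\theta^T X_t)/\mathcal E(\Psi^X_t(\theta))$, which is a local martingale by the very definition of the characteristics of $X_t$, I would factor $\exp(i\theta^T X_t) = \mathcal E(\Psi^X_t(\theta))\, Z_t$ and time-change. This needs two commutation facts, both following from Theorem A together with $T_t$-continuity. First, the Dol\'eans-Dade exponential commutes with the time change: since $\mathcal E(\Psi^X_t(\theta))$ solves $dU_t = U_{t^-}\, d\Psi^X_t(\theta)$ and $\Psi^X_t(\theta)$ is itself $T_t$-continuous (its jumps are carried by the same measure $\rho$ governing the $T_t$-continuous jumps of $X_t$), Theorem A(i) shows $U_{T_t}$ solves the SDE driven by $\Psi^X_{T_t}(\theta)$, so by uniqueness $\mathcal E(\Psi^X_t(\theta))_{T_t} = \mathcal E(\Psi^X_{T_t}(\theta))$. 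Consequently $\exp(i\theta^T X_{T_t})/\mathcal E(\Psi^X_{T_t}(\theta)) = Z_{T_t}$. Second, $Z_t$ is $T_t$-continuous, being constant on each $[T_{t^-}, T_t]$ as both its numerator and denominator are, so time-changing the local martingale $Z_t$ preserves the local-martingale property (Jacod), \emph{without} any appeal to uniform integrability. This is exactly where the $T_t$-continuity hypothesis is indispensable.

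Finally, having produced a predictable triplet whose cumulant process is $\Psi^X_{T_t}(\theta)$ and for which $\exp(i\theta^T Y_t)/\mathcal E(\Psi^Y_t(\theta))$ is a local martingale, the uniqueness in the definition of local characteristics forces $(\alpha_{T_t}, \beta_{T_t}, \rho(dT_t \times dx))$ to be the local characteristics of $Y_t$ and $\Psi^Y_t(\theta) = \Psi^X_{T_t}(\theta)$, as claimed. I anticipate that the genuine difficulty lies entirely in the third step: the Dol\'eans-Dade commutation and the robustness of the local-martingale property under the time change are where $T_t$-continuity and Theorem A carry the argument, while the remaining manipulations are bookkeeping.
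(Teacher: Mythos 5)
Your proposal takes a genuinely different route from the paper, for the simple reason that the paper does not prove Theorem B at all: its ``proof'' is a one-line citation of Kallsen and Shiryaev, lemma 2.7, which is precisely this statement. What you have written is essentially a reconstruction of that lemma from the two ingredients the paper does make available, namely Theorem A and the uniqueness clause in the definition of local characteristics, and the architecture is right: identify the candidate cumulant with $\Psi^X_{T_t}(\theta)$ via Theorem A(ii) applied to a time-independent integrand; show $\mathcal E(\Psi^X(\theta))_{T_t} = \mathcal E(\Psi^X_{T_\cdot}(\theta))$ by time-changing the SDE $dU_t = U_{t^-}\,d\Psi^X_t(\theta)$ via Theorem A(i) and invoking uniqueness of the Dol\'eans-Dade solution; observe that $Z_t = \exp(i\theta^T X_t)/\mathcal E(\Psi^X_t(\theta))$ is a $T_t$-continuous local martingale, so its time change is an $\mathcal F_{T_t}$-local martingale by Jacod's theorem, which is exactly where $T_t$-continuity substitutes for uniform integrability; conclude by uniqueness. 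This is the correct mechanism, and in substance it is how the cited lemma is proved; your version has the merit of making visible, inside the paper's own notation, why $T_t$-continuity is the operative hypothesis.

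The one step you pass over too quickly is the claim that $\Psi^X_t(\theta)$ is itself $T_t$-continuous. Your parenthetical justification --- that its jumps are ``carried by the same measure $\rho$'' --- does not address the drift term $\alpha_t$ or the Gaussian term $\beta_t$ at all, and even for the integral term it is circular: constancy of $X$ on $[T_{t^-},T_t]$ directly controls the jump measure of $X$, not its predictable compensator $\rho$. The clean argument is by stopping. Fix $t$ and note that constancy of $X$ on $[T_{t^-},T_t]$ is equivalent to the indistinguishability of the stopped processes $X^{T_t}$ and $X^{T_{t^-}}$. Since local characteristics commute with stopping and are unique up to indistinguishability, the characteristics of $X$ stopped at $T_t$ and at $T_{t^-}$ coincide; hence $(\alpha_t,\beta_t,\rho)$, and with them $\Psi^X_t(\theta)$, are constant on $[T_{t^-},T_t]$. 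This small lemma is what licenses three of your moves at once: applying Theorem A(i) with $\Psi^X(\theta)$ as the driving semimartingale, the $T_t$-continuity of the denominator $\mathcal E(\Psi^X_t(\theta))$, and therefore the $T_t$-continuity of $Z_t$. With it supplied, your proof goes through.
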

\begin{proof} See Kallsen and Shiryaev (2002), lemma 2.7.  \qed \end{proof}

\bigskip

\begin{proof}[Proof of proposition \ref{prop1}]
Let $(\mu, \Sigma,0)$ and $(0,0, \nu)$ be the L\'evy triplets of $X^1_t$ and $X^2_t$. Because of the $T^1_t$ and $T_t^2$-continuity assumption, we can apply theorem B and we immediately see that the local characteristics of $X^1_{T^1_t}$  and  $X^2_{T^2_t}$ are respectively $(T^1_t \mu, T^1_t \Sigma  , 0 )$ and $(0,0, dT^2_t \nu)$. By a result on the linear transformation of semimartingales, such two sets of local characteristics are additive (in Eberlein \emph{et al.} (2009), proposition 2.4, take $U$ to be the juxtaposition of two $n \times n$ identity blocks and $H=(X^1_{T^1_t}$  $X^2_{T^2_t})^T$ ), so that  $X_{T_t}$ has local characteristics\footnote{The process $X_{T_t}$ is a particular instance of an \emph{Ito semimartingale}: see Jacod and Protter (2003).}
 $(T^1_t \mu, T^1_t \Sigma  , dT^2_t \nu  )$

  Let $\Psi_t(\theta)$ be the cumulant process of $X_{T_t}$; by definition  the exponential $\mathcal E(\Psi_t(\theta))$ is well-defined if and only if $\theta \in \Theta$.
 But now the fact that $T^1_t$ and $T_t^2$ are continuous implies that $X_{T_t}$ is quasi-left-continuous (Jacod and Shiryaev 1987, chapter 2, proposition 2.9),
 that in turn is sufficient for $\Psi_t(\theta)$ to be continuous (Jacod and Shiryaev 1987, chapter 3, theorem 7.4). Therefore, since $\Psi_t$ is of finite variation, we have that 
$\mathcal E(\Psi_t(\theta))=\exp(\Psi_t(\theta))$;
in particular, this means that $\mathcal E(\Psi_t(\theta))$ never vanishes. By definition of the local characteristics, we then have that $M_t(\theta, X_t, T_t)$ is a local martingale for all $\theta \in \Theta$, 
 and thus it  is a martingale if and only if $\theta \in \Theta_0$. \qed
\end{proof}

\bigskip

\begin{proof}[Proof of proposition \ref{linearcommutativity}] An immediate consequence of theorem B is that, under the present assumptions, the class of continuous and pure jump martingales are closed under time changing, so that orthogonality follows. Therefore:
\begin{equation}\langle X_{T,U} \rangle_t= \langle X^c_{T} \rangle_t + \langle X^d_{U} \rangle_t.  \end{equation}
 The equation $\langle X^c_T \rangle_t=\Sigma T_t=\langle X^c \rangle_{T_t}$ can be established by the application of Dubins and Schwarz theorem. Regarding the discontinuous part, we notice that if $\rho( d t \times dx)$ is the jump measure associated to $X^d_t$  we have that
 $\rho$ is $U_t$-adapted because $X_t^d$ is $U_t$-continuous. Hence, the application of theorem A, part \emph{(ii)}, yields: \begin{equation}\langle X^d \rangle_{U_t}=\sum_{t < U_t}(\Delta X_t)^2 = \int_0^{U_t} x^2 \rho(ds \times dx)= \int_0^{t} x^2 \rho(dU_s \times dx)=\langle X^d_U \rangle_t. \end{equation} \qed 
\end{proof}

\begin{proof}[Counterexample to proposition \ref{linearcommutativity}] Let $X_t^c$ be a standard Brownian motion, and let $T_t$ be an inverse Gaussian subordinator with parameters $\alpha>0$ and $1$, independent of $X_t^c$.  The process $X_{T_t}^c$ is a \emph{normal inverse Gaussian process} of parameters  $(\alpha, 0, 0, 1)$ and is a pure jump process  (Barndorff-Nielsen 1997). Therefore by letting $X^d_t=X_{T_t}^c$ and $U_t=t$ we have $X^c_{T_t}=X_{U_t}^d$ so that orthogonality does not hold; moreover
$\langle X_{T,U}\rangle_t=2 \langle X^d \rangle_t$ while the left hand side of (\ref{ortocomm}) equals $\langle T \rangle_t + \langle X^d \rangle_t$.
\qed
\end{proof}

\bigskip

\begin{proof}[Proof of proposition \ref{DTCchar}]
 Since $T_t$ and $U_t$ are of finite variation, the total realized variance of an asset as in (\ref{AssetDynamics})  satisfies $TV_t= -\theta^2_0 \langle X_{T,U}\rangle_t$, so that  by proposition \ref{linearcommutativity} we have:
\begin{equation}\label{TV}
TV_t=  -\theta^2_0( \sigma^2{T_t} + \langle X^d \rangle _{U_t}).
\end{equation}
The application of proposition \ref{prop1} to  $C_t + D_t$ guarantees that the process in (\ref{measurechange})  is a martingale for all $z,w \in \mathbb C$ such that $(iz\theta_0,i w \theta_0) \in \Theta_0$. By using  relation (\ref{TV}) and operating the change of measure entailed by (\ref{measurechange}) we have: 
\begin{align}\label{cf1} \Phi& _{t_0} (z,w)=  \e_{t_0}[ \exp(iz\log( \tilde S_t/S_{t_0} ) + i w  \,  (TV_t- TV_{t_0}  )] \nonumber \\  = & \e_{t_0}  [  \exp(iz (i \theta_0 (\Delta X^c_{T_t} + \Delta X^d_{U_t}) - \Delta T_t \psi^c_X(\theta_0) - \Delta U_t \psi^d_X(\theta_0)) - i w \theta_0^2 ( \sigma^2 { \Delta T_t} + \Delta \langle X^d \rangle_{U_t}) )]  \nonumber \\ = & 
 \e_{t_0}  [ \exp(  i(iz \theta_0, iw \theta_0) \cdot ( \Delta C_{T_t} + \Delta  D_{U_t}) - \Delta T_t (iz\psi^c_X(\theta_0)+i  w \theta_0^2 \sigma^2)  -  \Delta U_t i z \psi^d_X(\theta_0) )]  \nonumber \\   
 = & \e ^ {\mathbb Q}_{t_0}[   \exp ( -\Delta T_t ( \theta_0 \mu  (z - i z)  - \theta^2_0 \sigma^2 (  z^2 + i z     - 2 i w )/2) \nonumber- \Delta U_t(i z \psi_X^d(\theta_0)- \psi_D(iz\theta_0, iw \theta_0)) )].
  \end{align}
 To fully characterize $\Phi_{t_0}$ all that is left is expressing $\psi_D$ in terms of $\nu$. Since 
\begin{equation}\label{Dimplicit} \psi_D(z,w)=\log \e \left[\exp \left(\sum_{s<t}i z \Delta X^d_s +  i w (\Delta X_s^d)^2 \right)\right],\end{equation}
we have that:
\begin{equation}\label{D} \psi_D(z,w)= \int_{\mathbb R}(e^{iz x + iw x^2}-1-i(z x+  w x^2) \I_{|x| \leq 1})\nu(dx) \end{equation}
which completes the proof.  \qed
\end{proof}

\bigskip

\begin{proof}[Proof of proposition \ref{LewisThm}] We follow the proof Lewis (2001), theorem 3.2, lemma 3.3 and theorem 3.4. By writing the expectation as an inverse-Fourier integral (which can be done by the assumptions on $F$ and because $\Phi_{t_0}$ is a characteristic function) and passing the expectation under the integration sign we have:
\begin{align}\e_{t_0} &[e^{-r(t-t_0)}F(Y_t, \VolY)]=\e_{t_0} \left[\frac{e^{-r(t-t_0)}  }{4 \pi^2} \int_{i k_1-\infty}^{i k_1 +\infty}  \int_{i k_2-\infty}^{i k_2+\infty}  S_t^{-i z} e^{- i w \VolY } \hat F(z,w)dz dw \right]  \nonumber \\ & =\frac{ e^{-r(t-t_0)} }{4 \pi^2} \int_{i k_1-\infty}^{i k_1 +\infty}  \int_{i k_2-\infty}^{i k_2+\infty} e^{-i w \langle Y \rangle _{t_0}} S_{t_0}^{-i z} e^{-r(t-t_0)iz} \Phi_{t_0}(-z, -w) \hat F(z,w)dz dw.  
\end{align}

All that remains to be proven is that Fubini's theorem application is justified. Let  $N_t=\log M_t(\theta_0, X_t, (T_t, U_t))$ be the discounted, normalized log-price; define the probability transition densities $p_t(x,y)=\mathbb P(N_t< x, \langle N \rangle_t <y| \; t_0, N_{t_0}, \langle N \rangle_{t_0})\I_{\{x \in \mathbb R, y \geq \langle N \rangle_{t_0}\}}$,  and let $\hat p_t(z,w)$ be their characteristic functions.
For all $(z,w) \in L_{k_1,k_2}$ we have:
  \begin{align} \int_{i k_1-\infty}^{i k_1 +\infty}  \int_{i k_2-\infty}^{i k_2+\infty} & \Big|e^{-i w \langle Y \rangle _{t_0}} S_{t_0}^{-i z} e^{-r(t-t_0)i z} \Phi_{t_0}(-z, -w)\Big| \hat F(z,w)dz dw   \nonumber \\ \label{parseval} = & \int_{i k_1-\infty}^{i k_1 +\infty}   \int_{i k_2-\infty}^{i k_2+\infty}   \hat p_t(-z,-w)  \hat F(z,w)dz dw  \nonumber  \\=& \int_{\mathbb R^2}    \hat p_t(-z+i k_1,-w+i k_2)   \hat F(z+i k_1,w+ i k_2)dz dw. 
  \end{align}

For $x \in \mathbb R$, $y \geq 0$, set $f(x,y)=e^{ -k_1 x -  k_2 y }F(x,y)$  $g(x,y)=e^{k_1 x + k_2 y}p_t(x,y)$. We see that the integrand in the right-hand side of (\ref{parseval}) equals   $ \hat g^*(z,w)\hat {f}(z,w)$. 
 But now $f$ is $ L^1(dx \times dy)$ because $F$ is Fourier-integrable in $\Sigma_F$ (for $(z,w) \in \Sigma_F$ take $\mbox{Re}(z)=\mbox{Re}(w)=0$); similarly, $\hat g^*$ is $L^1(dz \times dw)$ because of the $L^1$ assumption on $\Phi_{t_0}$.  
 Therefore, the application of Parseval's formula 
 yields:
\begin{align}  \int_{-\infty}^{ +\infty}  \int_{-\infty}^{+\infty} &    \hat p_t(-z +i k_1,-w+i k_2) \hat F(z+ik_1,w+ik_2)dz dw  \nonumber \\ = 4 \pi^2 \int_{-\infty}^{ +\infty} & \int_{-\infty}^{+\infty}    p_t(x,y)F(x,y)dx dy  = 4 \pi ^2 \e_{t_0}[ F(N_t, \langle N \rangle_t)] < +\infty,
\end{align}
since $F \in  L^1_{t_0}(N_t, \langle N  \rangle_t)$. 
\bigskip
\end{proof}

\begin{proof}[Proof of the equations of section \ref{wish}]

We can endow $Y_t$ with a correlation structure as follows. Let $Z_t$ be a two-dimensional matrix Brownian motion independent of $W_t$. The matrix process:
\begin{equation}  B_t= \left( \begin{matrix}     \rho  W^1_t  +   \sqrt{ 1 - \rho^2} Z^{1,1}_t & Z_t^{1,2}\\ \rho  W^2_t  +   \sqrt{ 1 - \rho^2} Z^{2,1}_t  & Z_t^{2,2} \end{matrix}   \right)
\end{equation}
is also a matrix Brownian motion enjoying the property that $\langle  W^j, B^{j,1} \rangle_t = \rho t$ and $ W_t$ is independent of $B_t^{j,2}$  for $j=1,2$. Since $\Sigma_t^{i,i}=(\sigma_t^{i,i})^2+(\sigma_t^{1,2})^2$, we have that $X_t^c$ is indeed a Brownian motion and the activity rates are connected through the element $\sigma_t^{1,2}$. 

To verify equations (\ref{activitycorr}) and (\ref{wishcorr}), observe that for $j=1,2$ there exist some bounded variation processes $F^j_t$ such that
\begin{equation}
 d \Sigma_t^{j,j}= F_t^j dt + 2 \sigma_t^{1,j}(Q^{1,j}d B_t^{1,1} + Q^{2,j} d B_t^{1,2})+ 2 \sigma_t^{j,2}(Q^{1,j}d B_t^{2,1} + Q^{2,j} d B_t^{2,2}),
\end{equation}
from which:
\begin{equation}
d w^j_t:= \frac{ d \Sigma_t^{j,j} -F^j_t dt}{ 2 \sqrt { \Sigma_t^{j,j}((Q^{1,j})^2+ (Q^{2,j})^2) }} = \frac{ \sigma_t^{1,j} (Q^{1,j}d B_t^{1,1} +Q^{2,j}d B_t^{1,2})+  \sigma_t^{j,2} (Q^{1,j}d B_t^{2,1} +Q^{2,j}d B_t^{2,2})  }{ \sqrt {\Sigma_t^{j,j}((Q^{1,j})^2+ (Q^{2,j})^2) }}.
\end{equation}
By taking the quadratic variation of the right-hand side we see that $w^j_t$ are two Brownian motions such that $d\Sigma_t^{j,j}= F^j_t dt+ 2 \sqrt {\Sigma_t^{j,j}((Q^{1,j})^2+ (Q^{2,j})^2) } dw^j_t$; equations (\ref{activitycorr}) and (\ref{wishcorr}) then follow from a direct computation.

Since $X_{U_t^d}$ is orthogonal to every entry of the matrix Brownian motion $B_t$, the change in the dynamics of  $\Sigma_t$ under $\mathbb Q(z,w)$ is only due to the correlation between $X^c_t$ and $B_t$. Hence, for $(z,w) \in \Theta$, the Radon-Nikodym derivative $M_t$ to be considered in (\ref{measurechange}) reduces to
 
\begin{equation}
M_t=\mathcal E \left(iz  \int_0^t \sqrt{ \Sigma_s^{1,1} } d X^c_s \right).
 \end{equation}
  Furthermore, for $j=1,2$ we have: 
 \begin{align} d \left \langle \int_0^{\cdot} \sqrt{ \Sigma_s^{1,1}} d X_s^c,  B^{j,1} \right \rangle_t & = \rho  \sigma^{1,j}_t dt \\d  \left \langle \int_0^{\cdot} \sqrt{ \Sigma_s^{1,1}}d X_s^c,  B^{j,2} \right \rangle_t & = 0
\end{align}
so that application of Girsanov's theorem tells us that 
\begin{equation} d \tilde  B_t = d B_t - iz \rho \left( \begin{matrix}  \displaystyle{    \sigma^{1,1}_t dt }& 0  \\ \displaystyle{   \sigma^{1,2}_t dt } & 0  \end{matrix}   \right)
\end{equation}
is a $\mathbb Q(z,w)$-matrix Brownian motion. Solving the above for $B_t$ and substituting in (\ref{Wishart}) yields (\ref{Wishartz}). Equation (\ref{WishartPhi}) then follows from (\ref{generallaplace}).

Finally, we give the formula for $\mathcal L_{T_t, U_t}(\cdot)$. For $\tau >0$ and $n>1$ consider the transform:  
\begin{equation}\label{integrwishtransf}
 \phi_\Sigma(z)=\e \left[\exp\left(-\int_0^\tau \sum_{j=1}^n  z_j \Sigma_s^{j,j}ds \right) \right]
\end{equation}
 for every vector of complex numbers $z=(z_1, \ldots, z_n)$ such that the above expectation is  finite. The function $\phi_\Sigma (z)$ is exponentially-affine of the form 
\begin{equation}\label{expaffinewish}
 \phi_\Sigma(z)=\exp(-a(\tau)-Tr(A(\tau) \Sigma_0)),
\end{equation} 
 since it is a particular case of the transforms studied in e.g. Grasselli and Tebaldi (2007), 
  and Gouri\'eroux  (2003).  The ODEs for $A(\tau), a(\tau)$ are given by:
\begin{equation}
\begin{array}{rl}\label{wishode1} \displaystyle{A(\tau)'= A(\tau)M+M^T A(\tau) - 2 A(\tau) Q^T Q A(\tau) +D, } &  A(0)=0 \end{array} 
\end{equation}
\begin{equation}\label{wishode2} 
\begin{array}{rl}\displaystyle{a(\tau)'= Tr(c Q^T Q A(\tau) ),}  &  a(0)=0. \end{array}
\end{equation}
Here $D$ is the diagonal matrix having the values $z_1, \ldots , z_n$ on the diagonal. The solution of (\ref{wishode1})-(\ref{wishode2})  is obtainable through a linearization procedure that entails  doubling the dimension of the problem, which yields:
\begin{equation}\label{integrwishtransf1} A(\tau)=(A^{2,2}(\tau))^{-1}A^{2,1}(\tau)
\end{equation}
\begin{equation}\label{integrwishtransf2} a(\tau)= \frac{c}{2} Tr(\log\left(A^{2,2}(\tau)\right) +M^T \tau)
\end{equation}
\begin{equation}\label{integrwishtransf3}\left(\begin{matrix} A^{1,1}(\tau) & A^{1,2}(\tau) \\ A^{2,1}(\tau) & A^{2,2}(\tau)  \ 
\end{matrix}\right) = \exp \left( \tau  \left( \begin{matrix} M  & 2 Q^T Q \\ D & -M^T  \end{matrix}\right) \right)
\end{equation}
(see for example  Gouri\'eroux  2003, proposition 7, or Grasselli and Tebaldi 2007, section 3.4.2). The formula for $\mathcal L_{\Delta T, \Delta U }$  follows from (\ref{integrwishtransf1})- (\ref{integrwishtransf3}) when we choose $n=2$,  $(z_1, z_2)=(z,w)$ in (\ref{integrwishtransf}), and set $\tau=t-t_0$, $ \Sigma_0=\Sigma_{t_0}$ in (\ref{expaffinewish}).  \qed
\end{proof}

\bibliographystyle{natbib}

\begin{thebibliography}{}


\bibitem{AneGeman}
{Ane, T. \& Geman, H.} (2000).
\newblock Order Flow, Transaction Clock, and Normality of Asset Returns.
\newblock {\em  The Journal of Finance}, {\em 55},  2259--2284.


\bibitem{Bates1}
{Bates, D.S.} (1996).
\newblock Jumps and Stochastic Volatility: Exchange Rate Processes Implicit in Deutsche Mark Options.
\newblock {\em Review of Financial Studies}, { \em 9},  69--107.

\smallskip


\bibitem{Bergomi}
{ Bergomi, L.} (2005).
\newblock Simle Dynamics I \& II. 
\newblock {\em Risk}, {\em 94}.

\smallskip

\bibitem{NIG}
{ Barndorff-Nielsen, O.E.} (1997).
\newblock Processes of normal inverse Gaussian type. 
\newblock {\em Finance and Stochastics}, {\em 2},  41--68. 

\smallskip

\bibitem{Bru}
{Bru, M.F.} (1991).
\newblock Wishart processes. 
\newblock {\em Journal of Theoretical Probability}, {\em 4},  725--743.

\smallskip

\bibitem{FFT}{
 Carr, P. \& Madan, D.B.}  (1999).
\newblock Option valuation using the fast Fourier transform.
\newblock {\em Journal of Computational Finance}, {\em 2},  61--73.

\smallskip
  
\bibitem{CarrSun}
{Carr P. \& Sun, J.} (2007).
  \newblock{A New Approach for Option Pricing Under Stochastic Volatility.}
  \newblock{\em Review of Derivatives Research}, {\em 10}, 87--150.



\bibitem{CGMY}
{Carr, P.,  Geman H., Madan, D.B., Yor, M.} (2002).
\newblock The Fine Structure of Asset Returns: An Empirical Investigation.
\newblock {\em Journal of Business}, {\em 75}, 305--332.



\bibitem{CarrWu}
 {Carr, P. \& Wu, L.} (2004).
\newblock Time-changed L\'{e}vy processes and option pricing.
\newblock {\em Journal of Financial Economics}, {\em 71}, 113--41.



\bibitem{Clark}
{Clark, P.K.} (1973).
\newblock A Subordinated Stochastic Process Model with Finite Variance for Speculative Prices.
\newblock {\em Econometrica}, {\em 41}, 135--155.


\bibitem{ContTankov} 
{Cont, R. \& Tankov, P.} (2003).
\newblock {\em Financial Modelling With Jump Processes}. 
\newblock Chapman and Hall / CRC Press, London.




\bibitem{ridingsmiles}
{da Fonseca, J. \& Grasselli, M.} (2011). 
\newblock{Riding on the smiles},
\newblock{\em Quantitative Finance}, {\em 11}(11),  1609--1632.



\bibitem{DGT1}
{da Fonseca, J., Grasselli, M., Tebaldi, C. } (2008). 
\newblock{A multifactor Heston volatility model}.
\newblock{\em Quantitative Finance}, {\em 8},  591--604.



\bibitem{DGT2}
{da Fonseca, J., Grasselli, M., Tebaldi, C. } (2007). 
\newblock {Option pricing when correlation are stochastic: an analytical model}.
\newblock{\em Review of Derivatives Research}, {\em 2}, 151--180.





\bibitem{DiGrazianoTorricelli}
{ Di Graziano, G. \& Torricelli, L.} (2012).
\newblock Target Volatility Option Pricing.
\newblock {\em International Journal of Theoretical and Applied Finance}, {\em 15}(1).



\bibitem{DPS}
{Duffie, D., Pan,J., E., Singleton, K.} (2000).
\newblock Transform analysis and asset pricing for affine jump diffusions.
\newblock {\em Econometrica}, {\em 68},  1343--1376.




\bibitem{Dufresne}
{ Dufresne, D.} (2001).
\newblock The Integrated Square-Root Process.
\newblock {\em University of Montreal Research Paper}, {\em 90}.




\bibitem{Eberlein}
{Eberlein,E., Papapantoleon, A., Shiryaev, A.N.} (2009).
\newblock Esscher transforms and the duality principle for multidimensional semimartingales.
\newblock {\em  The Annals of Applied Probability}, {\em 19},   1944--1971.




\bibitem{Fang}
{Fang, H.} (2000).
\newblock Option Pricing Implications of a Stochastic Jump Rate.
\newblock {\em University of Virginia Working Paper}.



\bibitem{Filipovic}
{Filipovi\'c, D.} (2001).
\newblock A general characterization of one factor affine term structure models.
\newblock {\em Finance and Stochastics}, {\em 5}, 389--412



\bibitem{GHT}
{Gallant, A.R., Hsu, C.T., Tauchen, G.} (2013). 
\newblock{Using Daily Range Data To Calibrate Volatility Diffusions And Extract The Forward Integrated Variance}.
\newblock {\em The Review of Economics and Statistics},  {\em 81}, 617--631.





\bibitem{GrasselliTebaldi}
{Grasselli, M. \& Tebaldi, C.} (2007). 
\newblock{Solvable Affine Term Structure Models}.
\newblock{\em Mathematical Finance}, {\em 18}, 135--153.




\bibitem{Gourieroux}
{Gouri\'eroux,  C.} (2003). 
\newblock{Continuous Time Wishart process for stochastic risk.}.
\newblock{\em Econometric Review}, {\em 25}, 177--217.


\bibitem{GS1}
{Gouri\'eroux, C. \& Sufana, R.} (2003). 
\newblock{Wishart quadratic term structure models}.
\newblock CREF 03-10, HEC Montreal.

\bibitem{GS2}
{Gouri\'eroux, C. \& Sufana, R.} (2010). 
\newblock {Derivative Pricing With Wishart Multivariate Stochastic Volatility}.
\newblock{\em Journal of Business and Economic Statistics}, {\em 3},  438--451.



\bibitem{HuangWu}
{Huang, J. \& Wu, L.} (2004).
\newblock Specification Analysis of Option Pricing Models Based on Time-Changed L\'evy Processes.
\newblock {\em The Journal of Finance}, {\em 59},  1405--1440.


\bibitem{Heston}
{Heston, S.L.} (1993).
\newblock A closed-form solution for options with stochastic volatility with applications to bond and currency options.
\newblock {\em Review of Financial Studies}, {\em 6},  327--343.


\bibitem{Jacod}
{Jacod, J.} (1979).
\newblock {\em Calcul Stochastique et Probl\`emes de Martingales}.
 \newblock Lecture Notes in Mathematics, Springer, Berlin.

\bibitem{JacodProtter}
{Jacod, J. \& Protter, P.} (2011).
\newblock  {\em Discretization of Processes}.
 \newblock Stochastic Modelling and Applied Probability, Springer, Berlin.


\bibitem{JacodShiryaev}
{Jacod, J. \& Shiryaev, A.N.} (1987).
\newblock {\em Limit Theorems for Stochastic Processes}.
\newblock  Grundlehren der Mathematischen Wissenschaften, {\em 288}, Springer, Berlin.





\bibitem{KallsenShiryaev}
{Kallsen, J. \& Shiryaev, A.N.} (2002).
\newblock {Time change representation of stochastic integrals}.
\newblock {\em Theory of Probability and its Applications}, {\em 46}, 522--528.


\bibitem{KaratzasShreve}
{Karatzas, I. \& Shreve, S.E.} (2000).
\newblock {\em Brownian Motion and Stochastic Calculus}.
\newblock Springer, Berlin.



\bibitem{Kou}
{Kou, S.G.} (2002).
\newblock A Jump-Diffusion Model for Option Pricing.
\newblock {\em Management Science}, {\em 48},  1086--1101.











\bibitem{LewisBook}
{Lewis, A.} (2000).
\newblock {\em Option Valuation under Stochastic Volatility}.
\newblock Finance Press.

\bibitem{LewisPaper}
{Lewis, A.} (2001).
\newblock   A simple Option Formula For General Jump-Diffusion And Other Exponential L\'{e}vy Processes.
\newblock  {\em OptionCity.net Publications}. 





\bibitem{Merton}
{Merton, S.C.} (1976).
\newblock Option pricing when underlying stock returns are discontinuous.
 \newblock {\em Journal of Financial Economics}, {\em 3},  125--144.



\bibitem{Monroe}
{Monroe, I.} (1978).
\newblock Processes that can be embedded in Brownian motions.
\newblock {\em The Annals of Probability}, {\em 6},  42--56.




\bibitem{SantaClaraYan}
  {P. Santa Clara \& S. Yan} (2010). 
  \newblock {Crashes, Volatility, and the equity premium: lessons from S \& P 500 options},
  \newblock {\em The Review of Economics and Statistics},
  {\em 92}, 435--451.
  




\bibitem{Sin}
{Sin, C.A.} (1998).
\newblock Complications with Stochastic Volatility Models.
\newblock {\em Advances in Applied Probability}, {\em 30}, 256--268.







\bibitem{Torricelli}
{ Torricelli, L.} (2013).
\newblock Pricing joint claims on an asset and its realized variance in stochastic volatility models.
\newblock {\em International Journal of Theoretical and Applied Finance}, {\em 16}(1).



\bibitem{ZhengKwok}
{W., Zheng \& L.K. Kwok} (2014).
\newblock 
Closed Form Pricing Formulas for Discretely Sampled Generalized Variance Swaps.
\newblock {\em Mathematical Finance}, {\em 24}(4), 855--881.

\end{thebibliography}

\section*{Tables and figures}

\begin{table}

\caption{ \small{{Parameters from the S\&P estimations of Fang (2000), section 4. }}}
\vspace{10pt}
\begin{tabular}{ llllll }
\hline\noalign{\smallskip}
    Parameters  & Black-Scholes & Heston  & Merton & Bates  & Fang   \\ 
  \hline

 $\sigma_{t_0}$     & 0.14    & 0.15   & 0.12  & 0.15  & 0.14   \\  
\hline 
 $\alpha$           &  & 4.57   &       &  8.93  & 6.5 \\
\hline
$\theta$            &  & 0.0306 &       &  0.0167 &  0.0104 \\
\hline
 $\eta$              &  &  0.48  &       & 0.22   & 0.2 \\
\hline
 $\rho$              &  & -0.82  &       & -0.58 & -0.48  \\
\hline
  $\lambda_0$     &      &        & 1.42  & 0.39  & 0.41       \\   
\hline
 $\delta$           &  &        &   0.0894 &  0.1049 & 0.2168   \\
\hline
$\kappa$             &     &        &  -0.075 &  -0.11 & -0.21  \\
\hline
$\alpha_{\lambda}$  &    &        &      &      & 5.06  \\ 
\hline
 $\theta_{\lambda}$  &   &        &      &      &  0.13  \\ 
\hline
 $\eta_{\lambda}$     &  &        &      &      & 1.069  \\
\hline\noalign{\smallskip}
\end{tabular}

\end{table}

\begin{table}

\caption{ \small{{ $S_{t_0}=100$, $K=H=80$, $Q=0.05$, $t_0=0$, $t=1$, $r=0.06$, $\target=0.1$, $TV_{t_0}=0$.}}}
\begin{tabular}{  llllllll }
\hline\noalign{\smallskip}
  \multicolumn{1}{l}{Model} & \multicolumn{2}{l} {Vanilla Call}  &   \multicolumn{2}{l}{Volatility Call}  & \multicolumn{2}{l}{TVO Call}   \\  
\cline{2-7}

  \multicolumn{1}{l}{} & \multicolumn{1}{l} {AV}  &   \multicolumn{1}{l} {MC}  & \multicolumn{1}{l}{AV} &   \multicolumn{1}{l} {MC} &  \multicolumn{1}{l}{AV} &  \multicolumn{1}{l} {MC}     \\
\hline

B-S                &     24.7627  &   24.7775(0.05\%)    &   0.0847 &   0.0848(0.12\%) &   17.5441  &   17.6982(0.87\%)  \\
\hline
Heston  &    25.3893 &  25.3710(0.07\%) & 0.1088 &  0.1084(0.37\%) & 17.2248 &  17.6044(2.16\%)  \\
\hline
Merton                  &   25.3243 & 25.2290(0.38\%)  &     0.1192 &  0.1194(0.17\%)  &   17.7529 & 17.7922(0.22\%)    \\  
\hline
Bates                  &   25.1166   &   25.0889(0.11\%)&    0.1002 &  0.1005(0.30\%) &  18.5980  &   18.7480(0.80\%) \\
\hline
Fang                &     25.5686  &    25.6508(0.32\%)    &   0.0907 &   0.0892(1.68\%) &   24.0494  &   24.0764(0.11\%)  \\
\hline\noalign{\smallskip}
\end{tabular}

\vspace{-.1cm}

\end{table}

\begin{table}

\caption{ \small{{ $S_{t_0}=100$, $K=H=120$, $Q=0.1$, $t_0=0.5$, $t=4$, $r=0.039$, $\target=0.1$, $TV_{t_0}=0.018$.}}}
\begin{tabular}{  llllllll }
\hline\noalign{\smallskip}
  \multicolumn{1}{l}{Model} & \multicolumn{2}{l} {Vanilla Call}  &   \multicolumn{2}{l}{Volatility Call}  & \multicolumn{2}{l}{TVO Call}   \\  
\cline{2-7}

  \multicolumn{1}{l}{} & \multicolumn{1}{l} {AV}  &   \multicolumn{1}{l} {MC}  & \multicolumn{1}{l}{AV} &   \multicolumn{1}{l} {MC} &  \multicolumn{1}{l}{AV} &  \multicolumn{1}{l} {MC}     \\
\hline

B-S                &  8.4801  &  8.4784(0.02\%)    &   0.1672 &  0.1695(1.36\%) &   5.7622 &   5.6957(1.17\%)  \\
\hline
Heston  & 10.3063 & 10.3023(0.04\%) & 0.2167 &  0.2172(0.23\%) &  6.3815&  6.7080(4.87\%) \\
\hline
Merton                 &   11.5845    &    11.5713(0.11\%)  &   0.2357 & 0.2356(0.04\%) &   7.4564 &  7.4239(0.44\%)
 \\
\hline
Bates                  &      9.8607  &   9.8371(0.24\%) &  0.2002 & 0.2001(0.05\%)  &  6.8180 & 6.9085(1.31\%)    \\
\hline
Fang                &    8.8630  &  8.8737(0.12\%)  &  0.1827 & 0.1828(0.05\%)    &     7.4173 &  7.5046(1.16\%) \\
\hline\noalign{\smallskip}
\end{tabular}

\vspace{-.1cm}
\end{table}

\begin{table}

\caption{ \small{{ $S_{t_0}=100$, $K=H=100$, $Q=0.25$, $t_0=1.25$, $t=1.5$, $r=0.072$, $\target=0.1$, $TV_{t_0}=0.23$.}}}
\begin{tabular}{  llllllll }
\hline\noalign{\smallskip}
  \multicolumn{1}{l}{Model} & \multicolumn{2}{l} {Vanilla Call}  &   \multicolumn{2}{l}{Volatility Call}  & \multicolumn{2}{l}{TVO Call}   \\  
\cline{2-7}

  \multicolumn{1}{l}{} & \multicolumn{1}{l} {AV}  &   \multicolumn{1}{l} {MC}  & \multicolumn{1}{l}{AV} &   \multicolumn{1}{l} {MC} &  \multicolumn{1}{l}{AV} &  \multicolumn{1}{l} {MC}     \\
\hline

B-S               &     3.7627  &   3.7346(1.02\%)    &   0.2300 &   0.2305(0.22\%) & 0.9771&   0.9437(3.54\%)  \\
\hline
Heston  &   4.1390 &   4.1304(0.21\%) & 0.2318  &  0.2320(0.09\%) & 1.0480 & 1.0451(0.28\%)
  \\
\hline
Merton                  &    4.4169   &    4.4435(0.60\%) &   0.2348 &     0.2343(0.21\%) &  1.1254 &  1.1235(0.17\%)  \\
\hline
Bates                  &   4.1842    &   4.1687(0.37\%)  & 0.2327 & 0.2328(0.04\%)   &  1.0593 &  1.0544(0.46\%)
   \\
\hline
Fang                &  4.3219  & 4.3420(0.46\%) &  0.2362  & 0.2362(0.00\%)   &  1.0919 & 1.0987(0.62\%)  \\
\hline\noalign{\smallskip}
\end{tabular}
\vspace{-.1cm}

\end{table}

\begin{table}

\caption{ \small{{ $S_{t_0}=100$, $K=H=60$, $Q=0.2$, $t_0=3$, $t=5$, $r=0.0225$, $\target=0.1$, $TV_{t_0}=0.19$.}}}
\begin{tabular}{  llllllll }
\hline\noalign{\smallskip}
  \multicolumn{1}{l}{Model} & \multicolumn{2}{l} {Vanilla Call}  &   \multicolumn{2}{l}{Volatility Call}  & \multicolumn{2}{l}{TVO Call}   \\  
\cline{2-7}

  \multicolumn{1}{l}{} & \multicolumn{1}{l} {AV}  &   \multicolumn{1}{l} {MC}  & \multicolumn{1}{l}{AV} &   \multicolumn{1}{l} {MC} &  \multicolumn{1}{l}{AV} &  \multicolumn{1}{l} {MC}     \\
\hline
B-S &     42.6506 &    42.6452(0.01\%)    &   0.2670&   0.2665(0.19\%) &  19.7252 &   19.9181(0.96\%)  \\
\hline
Heston  &  42.9595  & 43.0010(0.10\%)  &   0.2859 &  0.2858(0.03\%) &    19.8454 &  19.6512(0.99\%) \\
\hline
Merton            &  42.8984    & 42.8580(0.09\%)   &  0.2955 &   0.2954(0.03\%)&  19.4192 &   19.3975(0.11\%) \\
\hline
Bates                  &   42.7768   & 42.7928(0.04\%) & 0.2804 & 0.2802(0.07\%)  &  19.8042  & 19.8318(0.14\%)    \\
\hline
Fang                & 43.0039    &   43.0252(0.05\%)  &  0.2793 &  0.2791(0.07\%)  &   20.5992 & 20.5998(0.01\%)     \\
\hline\noalign{\smallskip}

\end{tabular}

\vspace{-.1cm}

\end{table}

\begin{table}

\caption{ \small{{ $S_{t_0}=100$, $K=H=130$, $Q=0.015$, $t_0=1$, $t=2.5$, $r=0.087$, $\target=0.1$, $TV_{t_0}=0.009$.}}}
\begin{tabular}{  llllllll }
\hline\noalign{\smallskip}
  \multicolumn{1}{l}{Model} & \multicolumn{2}{l} {Vanilla Call}  &   \multicolumn{2}{l}{Volatility Call}  & \multicolumn{2}{l}{TVO Call}   \\  
\cline{2-7}

  \multicolumn{1}{l}{} & \multicolumn{1}{l} {AV}  &   \multicolumn{1}{l} {MC}  & \multicolumn{1}{l}{AV} &   \multicolumn{1}{l} {MC} &  \multicolumn{1}{l}{AV} &  \multicolumn{1}{l} {MC}     \\
\hline

B-S   &     2.3393 &    2.3080(1.36\%)    &   0.1590&   0.1588(0.13\%) &   1.9535 &   1.8622(4.90\%)  \\
\hline
Heston    & 2.5098 &  2.5071(0.11\%)    & 0.1852  & 0.1862(0.54\%)  & 2.2190 &  2.1317(4.10\%) \\
\hline
Merton          &  3.7078      &    3.6843(0.64\%) &  0.1983   & 0.1981(0.10\%)  &   3.0330 & 3.0165(0.55\%)  \\
\hline
Bates                  &    2.7416   & 2.7380(0.13\%)   &   0.1767  &   0.1769(0.11\%) &   2.3727 &  2.3798(0.30\%)   \\
\hline
Fang               &   1.9814  &  1.9410(2.08\%)  &  0.1664 &  0.1668(0.24\%)  &   1.9453 &    1.9167(1.49\%)     \\
\hline\noalign{\smallskip}

\end{tabular}

\vspace{-.1cm}

\end{table}

\begin{figure}
\centering
\includegraphics[scale=0.59]{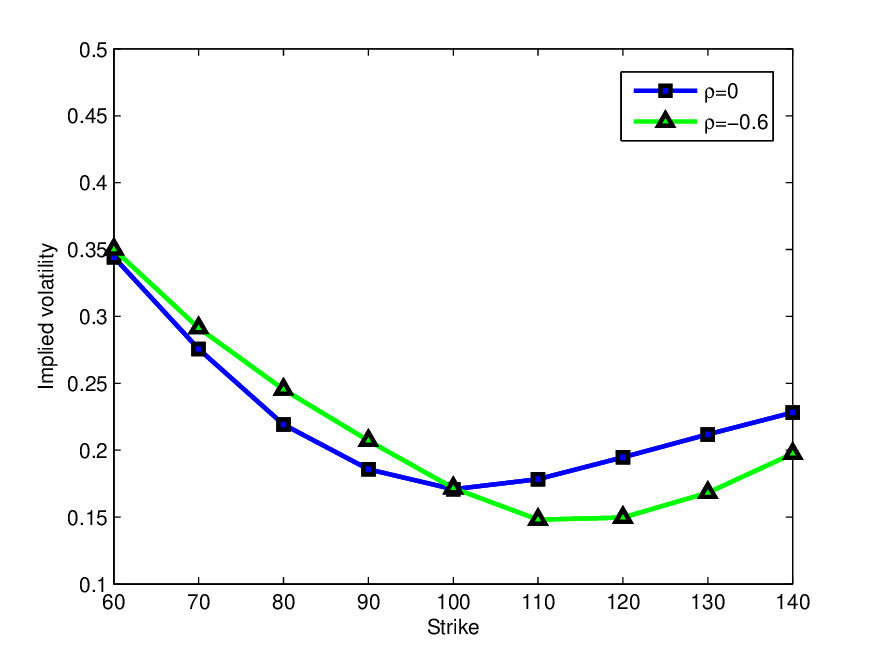}
\caption{\small{Comparison of the 3-month volatility skew in the Bates model for two different values of $\rho$. There is only a small difference in the skew of the two curves.}}
\label{fig1}
\end{figure}

\begin{figure}
\centering
\includegraphics[scale=0.59]{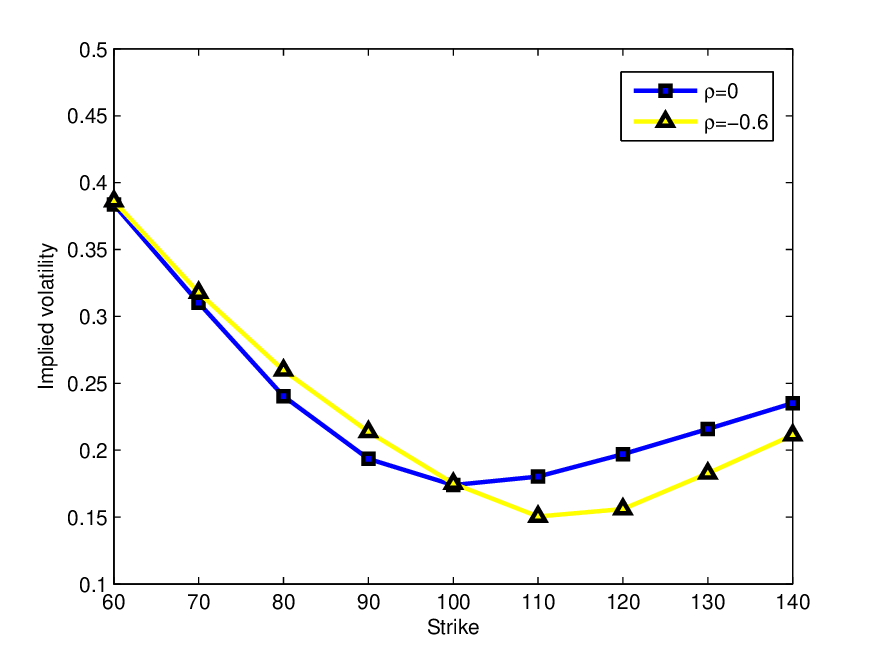}
\caption{\small{Comparison of the 3-month volatility skew in the Fang model for two different values of $\rho$.  The situation is very similar to that of figure \ref{fig1}.}}
\label{fig2}
\end{figure}

\begin{figure}
\centering
\includegraphics[scale=0.59]{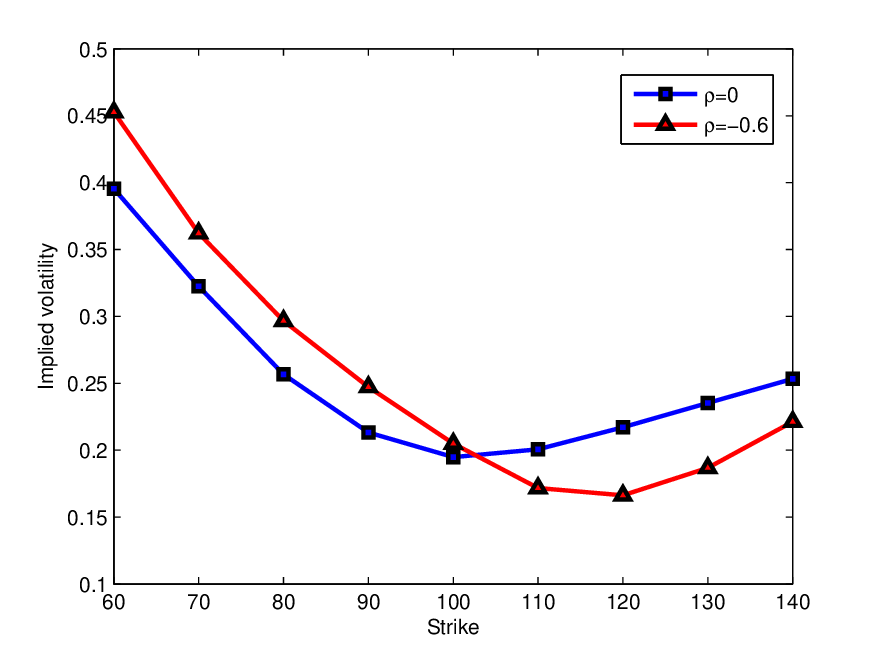}
\caption{\small{Comparison of the 3-month volatility skew in the CSVJA model for two different values of $\rho$. The negative skew increases much more than in the other two models. By equation (\ref{wishcorr}), a value  $R_1=-0.7$ must be used in (\ref{matrices}) in order to obtain $\rho=-0.6$.}}
\label{fig3}
\end{figure}

\end{document}